\numberwithin{equation}{section}
  \newcommand{\miniscule}{\@setfontsize\miniscule{4}{5}}% \tiny: 5/6
  \newcommand{\miniscule}{\@setfontsize\miniscule{5}{6}}% \tiny: 6/7
  \newcommand{\miniscule}{\@setfontsize\miniscule{5}{6}}% \tiny: 6/7
\newcounter{todocounter}
\def\diff{\mathrm{d}}
\DeclareMathOperator{\const}{const}
\def\1{\mathbb{1}} % Bold 1
\newcommand{\xh}{y_{H}}
\newcommand{\rh}{r_{H}}
\newcommand{\bstar}{b_{\ast}}
\newcommand{\Phis}{s}
\newcommand{\PhisZero}{\Phis^{(0)}}
\newcommand{\amp}{a}
\newtheorem{lem}{Lemma}
\DeclareMathOperator{\HeunG}{HeunG}
\begin{document}

\title[Dynamics of nonlinear scalar field with Robin BC on the SAdS background]{Dynamics of nonlinear scalar field with Robin boundary condition on the Schwarzschild\textendash{}Anti-de Sitter background}

\author{Filip Ficek$^{1,2}$\,\orcidlink{0000-0001-5885-7064}}
\author{Maciej Maliborski$^{1,2,\dagger}$\,\orcidlink{0000-0002-8621-9761}}
\address{$^1$University of Vienna, Faculty of Mathematics, Oskar-Morgenstern-Platz 1, 1090 Vienna, Austria}
\address{$^2$University of Vienna, Gravitational Physics, Boltzmanngasse 5, 1090 Vienna, Austria}
\email[]{filip.ficek@univie.ac.at}
\email[]{maciej.maliborski@univie.ac.at}
\address{$^{\dagger}$Corresponding author}

\begin{abstract}

This work concerns the dynamics of conformal cubic scalar field on a\linebreak Schwarzschild\textendash{}anti-de Sitter background. The main focus is on understanding how it depends on the size of the black hole and the Robin boundary condition. We identify a critical curve in the parameter space that separates regions with distinct asymptotic behaviours. For defocusing nonlinearity, the global attractor undergoes a pitchfork bifurcation, whereas for the focusing case, we identify a region of the phase space where all solutions blow up in finite time. In the course of this study we observe an interplay between black hole geometry, boundary conditions, and the nonlinear dynamics of scalar fields in asymptotically anti-de Sitter spacetime.
	
\end{abstract}

\date{\today}

\maketitle

\tableofcontents

% %%%%%%%%%%%%%%%%%%%%%%%%%%%%%%%%%%%%%%%%%%%%%%%%%%%%%%%%%%%%%%%
\section{Introduction}
\label{sec:Introduction}
% %%%%%%%%%%%%%%%%%%%%%%%%%%%%%%%%%%%%%%%%%%%%%%%%%%%%%%%%%%%%%%%

Consider the $3+1$ dimensional Schwarzschild-anti-de Sitter (SAdS) black hole, also known as the Kottler-Birmingham solution of the vacuum Einstein equation with negative cosmological constant $\Lambda<0$ \cite{Kottler.1918, Chruściel.2020Book}. In the Schwarzschild-like coordinate system $(t,r,\theta,\varphi)$ the SAdS metric takes the following form
\begin{equation}
        \label{eq:22.09.22_02}
        g = -V \diff{t}^{2} + V^{-1}\diff{r}^{2} + r^{2}\diff{\Omega}^{2}
        \,,
        \quad
        V(r) = 1 - \frac{2M}{r} + \frac{r^{2}}{\ell^{2}}
        \,,
\end{equation}
where $\diff{\Omega}^{2}=\diff{\theta}^{2} + \sin^{2}{\theta}\diff{\varphi}^{2}$ is the line element on the unit two-sphere, $M>0$ is the black hole (BH) mass and $\ell$ is the length scale parameter, which is related to the cosmological constant $\Lambda = -3/\ell^{2}$. The BH radius $\rh$ is the simple real root of $V(\rh)=0$. It is convenient to rewrite $V$ in terms of $\rh$
\begin{equation}
	\label{eq:22.09.22_03}
	V(r) = 1 + r^{2} -\frac{r_H}{r}\left(1+ r_H^2\right)
	\,,
\end{equation}
where we set the units so that $\ell=1$, and $r_{H}$ becomes a dimensionless parameter.

We study the dynamics of self-interacting spherically symmetric scalar field $\phi=\phi(t,r)$ propagating on the SAdS background \eqref{eq:22.09.22_02}
\begin{equation}
	\label{eq:22.09.22_04}
	\square_{g} \phi - m^{2}\phi - \lambda \phi^{3} = 0
	\,,
	\quad
	\lambda = \pm 1
	\,,	
\end{equation}
where $\lambda=-1$ corresponds to focusing and $\lambda=1$ to defocusing nonlinearity. 
The lack of global hyperbolicity of asymptotically AdS (AAdS) spacetimes, and in particular of the SAdS, necessarily rises the issue of boundary conditions (BC).
In order to have a well-defined problem, one has to prescribe proper data at the conformal infinity $r\rightarrow\infty$, also referred to as Scri. 
A close inspection of solutions to \eqref{eq:22.09.22_04} gives the following asymptotics for large distances
\begin{equation}
	\label{eq:23.11.22_01}
	\phi(t,r) \sim c_{+}(t)r^{-(3/2+\nu)} + c_{-}(t)r^{-(3/2-\nu)}
	\,,
\end{equation}
where $\nu=\sqrt{9/4+m^{2}}$. For the massless $m^{2}=0$ scalar field the Dirichlet boundary condition $c_{-}=0$ is enforced if we require square integrability, thus $\phi(t,r) \sim c_{+}(t)r^{-3}$.
For $m^{2}=-2$, which corresponds to the conformal coupling, and which is above the Breitenlohner-Freedman mass bound \cite{Breitenlohner.1982}, we have
\begin{equation}
	\label{eq:23.11.22_02}
	\phi(t,r) \sim c_{+}(t)r^{-2} + c_{-}(t)r^{-1}
	\,,
\end{equation}
and so there is a freedom in making the problem well defined.
In this work we intend to explore this flexibility.
Thus, we study the equation \eqref{eq:22.09.22_04} with $m^{2}=-2$ subject to the Robin boundary condition
\begin{equation}
	\label{eq:23.11.22_03}
	-c_{+}+bc_{-} = \lim_{r\rightarrow\infty} \left(r^{2}\partial_{r}(r\phi) + b (r\phi) \right) =0
	\,,
\end{equation}
which is a simple one-parameter ($b\in\mathbb{R}$) generalisation of the Dirichlet ($c_{-}=0$ equivalently $b=\infty$) and Neumann ($c_{+}=0$ or $b=0$) conditions.

To desingularize \eqref{eq:22.09.22_02} we introduce the null coordinate $v$ defined by
\begin{equation}
	\label{eq:22.09.22_05}
	\diff{v} = \diff{t} + \frac{\diff{r}}{V}
	\,,
\end{equation}
which brings \eqref{eq:22.09.22_05} into the ingoing Eddington-Finkelstein form
\begin{align}
	\label{eq:22.09.22_06}
	g = -V\diff{v}^{2} + 2\diff{v}\diff{r} + r^{2}\diff{\Omega}^{2}
	\,,
\end{align}
which is manifestly regular at the black hole horizon $r=r_H$. In this coordinate system the conformal ($m^{2}=-2$) wave equation \eqref{eq:22.09.22_04} for rescaled scalar variable and the compactified radial coordinate
\begin{equation}
	\label{eq:23.11.10_03}
	\Phi = r\phi
	\,,
	\quad
	y = \frac{1}{r}
	\,,
\end{equation}
becomes
\begin{equation}
	\label{eq:22.09.22_09}
	2\partial_{y}\partial_{v}\Phi - \partial_{y}\left(y^{2}V\partial_{y}\Phi\right) - \left(y\partial_{y}V + \frac{2}{y^{2}}\right)\Phi + \lambda\Phi^{3} = 0
	\,,
\end{equation}
where, with a slight abuse of notation, we write
\begin{equation}
	\label{eq:22.09.22_10}
	V(y) = 1 + y^{-2} - \frac{y}{\xh}\left(1+\xh^{-2}\right)
	\,,
	\quad
	\xh = \rh^{-1}
	\,.
\end{equation}
Now, $y=\xh$ denotes the location of the horizon while $y=0$ corresponds to the conformal infinity. Note that in the new coordinate system the Robin BC \eqref{eq:23.11.22_03} is
\begin{equation}
	\label{eq:22.09.22_11}
	\left.\left(-\partial_{v}\Phi + \partial_{y}\Phi - b \Phi\right)\right|_{y=0} = 0
	\,.
\end{equation}

It is instructive to look how the boundary condition affects the energy of the solution.
Multiplying \eqref{eq:22.09.22_09} by $\partial_{v}\Phi$ we get the local conservation law
\begin{multline}
	\label{eq:23.10.24_1}
	\partial_{v} \left(\frac{y^{2}}{2}V(y)(\partial_{y}\Phi)^{2} - \frac{1}{2} 
	\left(y\partial_{y}V(y)+\frac{2}{y^{2}}\right)\Phi^{2}
	+ \frac{\lambda}{4}\Phi^{4}
	\right) 
	\\
	= \partial_{y} \left( y^{2}\,V(y)\partial_{y}\Phi\partial_{v}\Phi - (\partial_{v}\Phi)^{2} \right)
	\,.
\end{multline}
Integrating \eqref{eq:23.10.24_1} over $y$ we obtain the energy loss formula
\begin{equation}
	\label{eq:23.10.03_010}
	\frac{\diff{E}}{\diff{v}} = -\left.\left(\partial_{v}\Phi\right)^{2}\right|_{y=\xh}
	+ \left.\left(\partial_{v}\Phi\left(\partial_{v}\Phi-\partial_{y}\Phi\right)\right)\right|_{y=0}
	\,,
\end{equation}
where we define the energy integral as
\begin{equation}
	\label{eq:23.10.03_01}
	E = \int_{0}^{\xh}\left(
	\frac{y^{2}}{2}V(y)\left(\partial_{y}\Phi\right)^{2}
	- \frac{1}{2} 
	\left(y\partial_{y}V(y)+\frac{2}{y^{2}}\right)\Phi^{2}
	+ \frac{\lambda}{4}\Phi^{4}
	\right)\diff{y}
	\,.
\end{equation}
Finally, using \eqref{eq:22.09.22_11} we rewrite \eqref{eq:23.10.03_010} as
\begin{equation}
	\label{eq:23.11.10_02}
	\frac{\diff{E}}{\diff{v}} = -\left.\left(\partial_{v}\Phi\right)^{2}\right|_{y=\xh}
	- \left.\frac{b}{2}\partial_{v}\left(\Phi^{2}\right)\right|_{y=0}
	\,.
\end{equation}
Observe that additionally to the negative energy flux through the horizon (the first term), there could be a positive (or negative) flow through the Scri if one chooses boundary data with $\partial_{v}\Phi|_{y=0}\neq 0$. In other words, dissipation or generation of energy at the boundary is absent only for the Dirichlet BC.

This work is an extension of \cite{Bizoń.2020} to a spacetime containing a black hole. As a first step in analysing the dynamics of asymptotically AdS black hole solutions, we study the cubic wave equation \eqref{eq:22.09.22_04} on a fixed SAdS spacetime. This substantial simplification allows us to comprehensively describe the nonlinear dynamics of scalar waves in a model with two parameters: the size of the black hole $\xh$ and the Robin boundary parameter $b$. In addition, we consider both the focusing and defocusing nonlinearities. We are mainly interested in how the nonlinear evolution changes with $(\xh,b)$. Specifically, if there is a global-in-time existence or if solutions can develop a singularity in finite time.
Besides finding a classical pitchfork bifurcation in the defocusing case, for the opposite sign of the nonlinearity, we discover a region of the phase space $(\xh,b)$ where all solutions blow up in finite time. 
 
 Although the main emphasis of this work is on the nonlinear dynamics, a substantial part of the analysis is devoted to the study of static solutions and their linear stability. This is a prerequisite to studying time-dependent solutions, as some static configurations play an essential role in the dynamics as global attractors or threshold solutions. This information allows us to provide a precise asymptotic description of global-in-time solutions and solutions near the threshold between blowup and dispersion.
  
A study of a scalar field in AAdS spacetimes with horizons has a rather long history. Extensive literature considers the hairy black holes in AdS, see \cite{Winstanley.2003, Harada.2023} and references therein. Much effort went into studying the quasi-normal mode (QNM) spectrum of linear fields on the SAdS background, mainly with Dirichlet boundary conditions, e.g., \cite{Horowitz.1999,Berti.2009, Cardoso.2003, Moss.2002, Warnick.2015}. Crucially, some works noted the presence of the unstable mode for certain values of the Robin parameter \cite{Holzegel.2014pzh, Araneda.2017, Kinoshita.2023}.
The well-posedness for the massive linear and nonlinear wave equation on AAdS with Dirichlet BC at Scri has been proven in \cite{HOLZEGEL.2012, Warnick.2013} and \cite{Guo.2020}, respectively. Dynamical studies of conformal scalar field on SAdS background with Dirichlet and Neumann boundary conditions were already performed in \cite{Chan.1997}. However, up to our knowledge, there are no works concerning the dynamics of the conformal cubic wave equation subject to the Robin boundary condition. In particular, the study of dichotomy between dispersion and blowup in SAdS and the nonlinear instability results are new. This work aims to fill that gap and provide a better understanding of the dynamics of the scalar field on SAdS with more general boundary conditions.

The manuscript is structured as follows. The first part extensively studies static solutions and their linear stability. We start with horizonless case $\xh\rightarrow 0$, i.e. the equation in AdS, and only after that we consider solutions in SAdS. Subsequently, we discuss the extreme case of a black hole filling the whole space $\xh\rightarrow\infty$.
The second part concerns the nonlinear dynamics, where we separately discuss the focusing and defocusing nonlinearities. The last section contains conclusions and future directions.

% %%%%%%%%%%%%%%%%%%%%%%%%%%%%%%%%%%%%%%%%%%%%%%%%%%%%%%%%%%%%%%%
\section{Static solutions in AdS ($\xh = \infty$)}
\label{sec:RegularCase}
% %%%%%%%%%%%%%%%%%%%%%%%%%%%%%%%%%%%%%%%%%%%%%%%%%%%%%%%%%%%%%%%

Before analysing the equation \eqref{eq:22.09.22_09} we discuss static solutions and their linear stability in the limiting case of $\xh=\infty$. This allows for a better understanding of the limit $\xh\rightarrow\infty$.

When studying the regular case, obtained formally by setting $\rh=0$ in \eqref{eq:22.09.22_02}-\eqref{eq:22.09.22_03}, it is convenient to use the compactified radial coordinate $x$, defined by $r=\tan{x}$. Then, the AdS metric takes the form \cite{Bizoń.2020}
\begin{equation}
	\label{eq:231021_01}
	g_{\textrm{AdS}} = \frac{1}{\cos^{2}{x}}\left(-\diff{t}^{2} + \diff{x}^{2} + \sin^{2}{x}\,\diff{\Omega}^{2}\right)
	\,,
\end{equation}
(recall $\ell=1$).
For this metric the conformal wave equation \eqref{eq:22.09.22_04} is
\begin{equation}
	\label{eq:231021_02}
	\partial_{t}^{2}\phi = \frac{1}{\tan^{2}{x}}\partial_{x}\left(\tan^{2}{x}\,\partial_{x}\phi\right) + \frac{2}{\cos^{2}{x}}\phi - \frac{\lambda}{\cos^{2}{x}}\phi^{3}
	\,.
\end{equation}
We introduce the rescaled scalar field, c.f. \eqref{eq:23.11.10_03}
\begin{equation}
	\label{eq:231021_05}
	\Phi = \tan{x}\, \phi
	\,,
\end{equation}
then \eqref{eq:231021_02} becomes
\begin{equation}
	\label{eq:231021_06e}
	\partial_{t}^{2}\Phi = \partial_{x}^{2}\Phi - \frac{\lambda}{\sin^{2}x}\Phi^{3}
	\,.
\end{equation}
This form of the wave equation is manifestly regular at $x=\pi/2$, a consequence of the conformal mass.
Smooth solutions at the origin behave as
\begin{equation}
	\label{eq:23.11.14_01}
	\Phi(t,x) = \mathcal{O}\left(x\right)
	\,.
\end{equation}
At $x=\pi/2$ we have the following expansion
\begin{equation}
	\label{eq:23.11.14_02a}
	\Phi(t,x) = \Phi_{1}(t)\left(\frac{\pi}{2}-x\right) + \Phi_{2}(t)\left(\frac{\pi}{2}-x\right)^{2} + \Phi_{3}(t)\left(\frac{\pi}{2}-x\right)^{3} + \cdots
	\,,
\end{equation}
with coefficients $\Phi_{j}(t)$, $j>2$, uniquely determined by the free functions $\Phi_1(t)$ and $\Phi_2(t)$, in agreement with discussion above. Consequently we impose
\begin{equation}
	\label{eq:231021_06}
	\left.\left(\partial_{x}\Phi + b\Phi\right)\right|_{x=\pi/2} = 0
	\,,
\end{equation}
which follows from the change of variables $\tan{x}=1/y$ in \eqref{eq:22.09.22_11}. Note that, this sign convention agrees with \cite{Holzegel.2014pzh} and is opposite to \cite{Bizoń.2020}.

Before continuing let us comment on the consequences of the BC on conservation of energy. Multiplying the equation \eqref{eq:231021_06} by $\partial_{t}\Phi$ and integrating over the radial coordinate we obtain the energy loss formula
\begin{equation}
	\label{eq:23.11.10_04}
	\frac{\diff}{\diff{t}}E = \left.\partial_{x}\Phi\, \partial_{t}\Phi\right|_{x=\pi/2}
	\,,
\end{equation}
where
\begin{equation}
	\label{eq:23.11.10_05}
	E = \int_{0}^{\pi/2}\left(\frac{1}{2}\left(\partial_{t}\Phi\right)^{2} + \frac{1}{2}\left(\partial_{x}\Phi\right)^{2} + \frac{\lambda}{4\sin^{2}{x}}\Phi^{4}\right)\diff{x}
	\,,
\end{equation}
is the energy integral. Alternatively, using \eqref{eq:231021_06} we can rewrite \eqref{eq:23.11.10_04} as
\begin{equation}
	\label{eq:23.11.10_06}
	\frac{\diff}{\diff{t}}E = -\left.\frac{b}{2}\partial_{t}\left(\Phi^{2}\right)\right|_{x=\pi/2}
	\,.
\end{equation}
Therefore, except for the Dirichlet and Neumann BC the energy of the solution is not conserved due to the boundary contribution.
Note, that by a simple rewriting of \eqref{eq:23.11.10_06} we can get a conserved quantity $\widetilde{E}$
\begin{equation}
	\label{eq:23.11.10_07}
	\widetilde{E} \equiv E + \left.\frac{b}{2}\left(\Phi^{2}\right)\right|_{x=\pi/2}
	\,,
	\quad \frac{\diff}{\diff{t}}\widetilde{E} = 0
	\,,
\end{equation}
which can be interpreted as the total energy of the system: energy in the 'bulk' plus the energy concentrated at the boundary.

We focus here on static solutions $\Phi(t,x)=\Phis(x)$ that satisfy 
\begin{equation}
	\label{eq:08.11.23_1}
	\Phis'' - \frac{\lambda}{\sin^{2}x}\Phis^{3}=0
	\,,
\end{equation}
and \eqref{eq:231021_06} at $x=\pi/2$. The regularity condition \eqref{eq:23.11.14_01} imposed on $\Phis$ is simply $\Phis(0)=0$.

% %%%%%%%%%%%%%%%%%%%%%%%%%%%%%%%%%%%%%%%%%%%%%%%%%%%%%%%%%%%%%%%
\subsection{Linear equation}
\label{sec:RegularCaseLinearEquation}
% %%%%%%%%%%%%%%%%%%%%%%%%%%%%%%%%%%%%%%%%%%%%%%%%%%%%%%%%%%%%%%%

The linear stability analysis of the conformal scalar field on the AdS background subject to the Robin BC was discussed in \cite{Bizoń.2020}. For the reader's convenience here we reproduce the main parts of this analysis.
Substituting
\begin{equation}
	\label{eq:23.11.13_01}
	\Phi(t,x) = e^{i\omega t}\psi(x)
	\,,
\end{equation}
into \eqref{eq:231021_06e} and neglecting the nonlinear term we obtain the eigenvalue problem
\begin{equation}
	\label{eq:23.11.13_02}
	-\omega^{2}\psi = \psi''
	\,.
\end{equation}
For $\omega^{2}>0$ the regular solution is $\psi(x) = C\sin{\omega x}$, $C=\const$. The boundary condition \eqref{eq:231021_06} introduces quantisation for the eigenfrequencies
\begin{equation}
	\label{eq:23.11.13_03}
	\omega = -b\tan\left(\omega \frac{\pi}{2}\right)
	\,.
\end{equation}
An elementary analysis shows that out of the infinitely many non-negative solutions to \eqref{eq:23.11.13_03} $\omega_{n}^{2}$, $n=0,1,\ldots$, the lowest eigenvalue $\omega_{0}^{2}$ vanishes at $b=\bstar\equiv -2/\pi$, and becomes an exponentially growing mode with the exponent $\sqrt{-\omega_{0}^{2}}$ satisfying
\begin{equation}
	\label{eq:23.11.13_04}
	\sqrt{-\omega_{0}^{2}} = -b \tanh\left(\sqrt{-\omega_{0}^{2}}\frac{\pi}{2}\right)
	\,,
\end{equation}
for $b<\bstar$. Therefore, for $b>\bstar$ the trivial solution is linearly stable, while for $b<\bstar$ the single growing mode renders it linearly unstable.

% %%%%%%%%%%%%%%%%%%%%%%%%%%%%%%%%%%%%%%%%%%%%%%%%%%%%%%%%%%%%%%%
\subsection{Focusing case}
\label{sec:RegularCaseFocusingCase}
% %%%%%%%%%%%%%%%%%%%%%%%%%%%%%%%%%%%%%%%%%%%%%%%%%%%%%%%%%%%%%%%

\begin{figure}[t]
	\centering
	\includegraphics[width=1\textwidth]{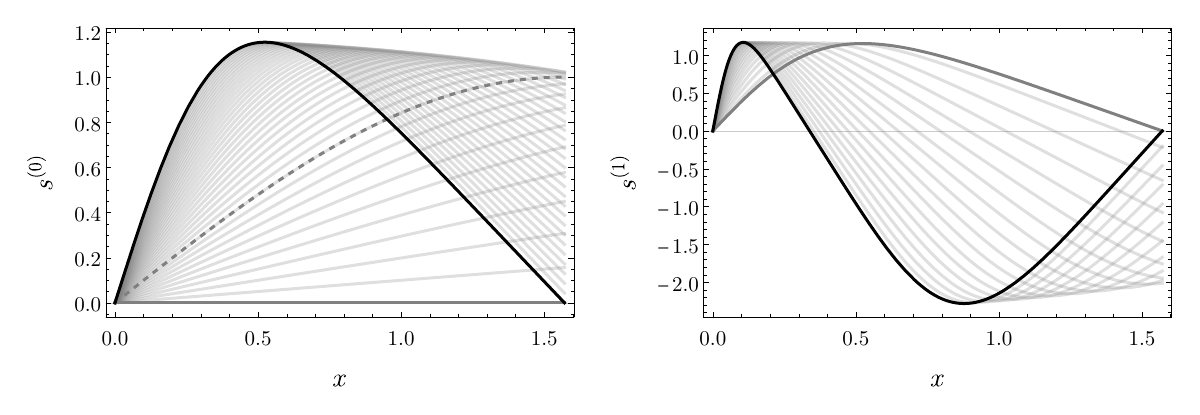}
	\caption{Profiles of the lowest static solutions $\Phis^{(0)}(x)$ and $\Phis^{(1)}(x)$ satisfying Robin BC for the focusing case (light grey). Left plot: the node-less solution bifurcates from zero (dark grey) for $b=\bstar$ and as $b$ increases to infinity it converges to \eqref{eq:23.11.06_01} (black). For $b=0$ this solution is given by \eqref{eq:23.11.16_29} (dashed). Right plot: as $b$ increases from $-\infty$ to $\infty$ the first excited state $\Phis^{(1)}$ smoothly interpolates between two static solutions satisfying Dirichlet BC (dark grey and black for $b=-\infty$ and $b=\infty$ respectively).}
	\label{fig:StaticConformalFocusingGlobal}
\end{figure}

To find static solutions we solve the two point boundary value problem \eqref{eq:08.11.23_1} using shooting method. For a given slope at the origin $s'(0)$ we integrate \eqref{eq:08.11.23_1} outward. For small $s'(0)$ the solution is monotonic and it smoothly extends beyond $x=\pi/2$, a regular point of the equation. For larger values of the initial slope the solution crosses zero finitely many times before it reaches the conformal boundary. Adjusting $s'(0)$ such that the boundary condition \eqref{eq:231021_06} is satisfied, for a given $b$, provides a criterion which selects a particular solution. In this way, we find countably many solutions, $s^{(n)}$, where $n$ is the nodal index, which enumerates number of nodes in the solution, see Fig.~\ref{fig:StaticConformalFocusingGlobal}. The solution $s^{(0)}$ does not exist for $b<\bstar$.

For small $s'(0)$ a node-less solution (with $b>\bstar$) can be constructed perturbatively. We write
\begin{equation}
	\label{eq:23.11.14_02}
	\Phis^{(0)}(x) = \varepsilon f_{1}(x) + \varepsilon^{3} f_{3}(x) + \ldots
	\,,
	\quad
	0<|\varepsilon|\ll 1
	\,,
\end{equation}
where $\Phis^{(0)}{}'(0)=\varepsilon$ in addition to the regularity condition $\Phis^{(0)}(0)=0$. Plugging this into \eqref{eq:08.11.23_1} and expanding in $\varepsilon$ we obtain perturbative equations, which can be solved order by order. The leading equations are
\begin{equation}
	f_1''(x) = 0
	\,,
	\quad
	f_3''(x) = -\frac{f_{1}^{3}}{\sin^{2}{x}}
	\,.
\end{equation}
\begin{figure}[t]
	\centering
	\includegraphics[width=1\textwidth]{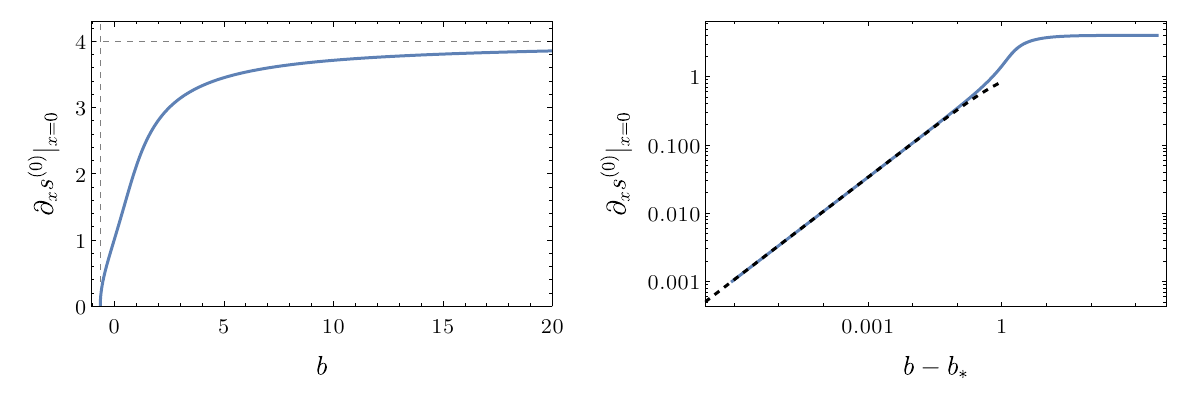}
	\caption{The slope at the origin of the static solution $\PhisZero$ for the focusing case as a function of the Robin parameter $b$. This solution exists only for $b>\bstar$. When $b\rightarrow\infty$ the slope asymptotes the value $4$, the slope of the solution \eqref{eq:23.11.06_01}. On the right plot the dashed line is the perturbative expansion \eqref{eq:23.11.14_02}-\eqref{eq:23.11.14_05}, which agrees with numerical data for $b-\bstar\ll 1$.}
	\label{fig:RegularStaticConformalFocusingPhiOrigin}
\end{figure}
Those are solved by
\begin{equation}
	\label{eq:23.11.14_03}
	f_1(x) = x
	\,,
\end{equation}
and
\begin{multline}
	\label{eq:23.11.14_04}
	f_3(x) =3 x^2 \left(\sum _{k=1}^{\infty } \frac{\sin (2 k x)}{k^2}\right) +\frac{3}{2} x \left(\zeta (3)+3 \sum _{k=1}^{\infty } \frac{\cos (2 k x)}{k^3}\right)\\
	-3 \sum _{k=1}^{\infty } \frac{\sin (2 k x)}{k^4}+x^3 \log (2 \sin x)\, ,
\end{multline}
which alternatively be expressed as
\begin{multline}
	\label{eq:23.11.14_05}
	f_{3}(x) = 3 i x^2 \text{Li}_2\left(e^{-2 i x}\right)+\frac{3}{2} x \left(\zeta (3)+3
   \text{Li}_3\left(e^{-2 i x}\right)\right)
   \\
   -3 i \text{Li}_4\left(e^{-2 i x}\right)+x^3
   \log \left(1-e^{-2 i x}\right)+\frac{i \pi ^4}{30}
	\,,
\end{multline}
where $\text{Li}_n(z)$ is the polylogarithm function \cite{NIST} and $\zeta(z)$ is the Riemann zeta function \cite{NIST}. Note that from this calculation immediately follows that $\bstar=-2/\pi$. Having this perturbative solution one can express $\varepsilon$ by the distance to the bifurcation point, $b-\bstar$, and compare with the numerical data. Results of this test are presented in Fig.~\ref{fig:RegularStaticConformalFocusingPhiOrigin}, where we plot the slope of $\PhisZero$ at the origin as a function of the Robin parameter.

To examine the linear stability of static solutions we substitute the ansatz 
\begin{equation}
	\label{eq:23.11.18_01}
	\Phi(t,x) = \Phis(x) + e^{i\sigma t}\chi(x)
	\,,
\end{equation}
into \eqref{eq:231021_06e} and linearise around $\Phis(x)$. This yields the eigenvalue problem
\begin{equation}
	\label{eq:23.11.18_02}
	-\sigma^{2}\chi = \chi'' + 3\frac{s^{2}}{\sin^{2}{x}}\chi
	\,,
\end{equation}
and the boundary condition
\begin{equation}
	\label{eq:23.11.18_02b}
	\chi'(\pi/2) + b\chi(\pi/2) = 0
	\,,
\end{equation}
which we also solve using the shooting technique.
Starting with $(\chi(0),\chi'(0))=(0,1)$ and some $\sigma^{2}$ we integrate \eqref{eq:23.11.18_02} outward and read off the solution at $x=\pi/2$ (as there are no singularities within the integration domain the solution remains smooth for at least $x\leq \pi/2$).
Adjusting $\sigma^{2}$ so that the condition \eqref{eq:23.11.18_02b} is satisfied we find the desired solution.

Using this numerical procedure we find for $s^{(n)}$ exactly $n+1$ negative eigenmodes\footnote{We use the convention where the super index refers to the nodal index of the static solution and subindex numbers the mode. For unstable modes we use non-positive indices.}
\begin{equation}
	\label{eq:23.11.18_20}
	\left(\sigma^{(n)}_{-n}\right)^{2} < \left(\sigma^{(n)}_{-(n-1)}\right)^{2} < \cdots < \left(\sigma^{(n)}_{0}\right)^{2} < 0
	\,,
\end{equation}
and infinitely many positive modes. Thus, all static solutions $s^{(n)}$ are linearly unstable. 

Additionally, the spectrum of linear perturbations of $\PhisZero$ close to $\bstar$ can be determined perturbatively using the expansion \cite{Bizoń.2020}. The main idea of this rather lengthy calculation, which we skip and refer the reader to \cite{Bizoń.2020}, is to use \eqref{eq:23.11.14_02}-\eqref{eq:23.11.14_05} and a similar expansion for the perturbation $\chi$ at $b=\bstar$ in order to find a correction to the linear spectrum $\omega_{*}\equiv \omega(\bstar)$ \eqref{eq:23.11.13_03} from which $\left(\sigma^{(0)}_{j}\right)^{2}$ bifurcate. For the lowest eigenvalue we obtain
\begin{equation}
	\label{eq:23.11.18_03}
	\left(\sigma^{(0)}_{0}\right)^{2} \approx -\frac{12}{\pi}\delta
	\,,
	\quad
	b=\bstar+\delta
	\,,
	\quad
	0<\delta\ll 1
	\,,
\end{equation}
(for higher modes the expressions are much longer so we omit them). These results compare very well with the data presented Tab.~\ref{tab:RegularQNF}, with the difference decreasing when $b\rightarrow\bstar$. The dependence of the lowest eigenvalues $\left(\sigma^{(0)}_{j}\right)^{2}$ on the boundary parameter is also shown in Fig.~\ref{fig:RegularStaticConformalFocusingEigenfrequencies}, which illustrates the behaviour of the spectra for $b$ close to $\bstar$ and shows the convergence of the eigenspectrum for $b\rightarrow\infty$ to the problem with Dirichlet BC.

\begin{figure}[t]
	\centering
	\includegraphics[width=0.5\textwidth]{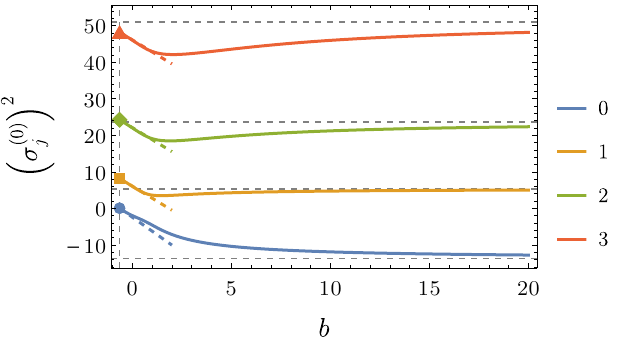}
	\caption{The dependence of the lowest eigenvalues (colour coded) of $\Phis^{(0)}$ on the Robin parameter $b$ for the focusing case. The eigenvalues bifurcate from the linear spectrum \eqref{eq:23.11.13_03} at $b=\bstar$ and for $b-\bstar\ll 1$ they are well approximated by the perturbative expansion (dashed lines). When $b\rightarrow\infty$ the spectrum converges to the eigenvalues of \eqref{eq:23.11.07_01} satisfying Dirichlet BC (horizontal dashed lines).}
	\label{fig:RegularStaticConformalFocusingEigenfrequencies}
\end{figure}

Below, we analyse two special cases for which the fundamental static solutions are explicit. This provides a cross-check of our numerical procedure for finding static solutions and the spectrum of linear perturbations.
We start by considering the Dirichlet BC, which corresponds to taking the limit $b\rightarrow\infty$ in \eqref{eq:231021_06}. We find that the node-less solution takes a simple form
\begin{equation}
	\label{eq:23.11.06_01}
	\PhisZero(x) = \frac{2\sin{2x}}{2-\cos{2x}}
	\,.
\end{equation}
This profile is included in Fig.~\ref{fig:StaticConformalFocusingGlobal} (left plot, solid black line). 
The eigenvalues of \eqref{eq:23.11.06_01} can be computed using Leaver's-type method \cite{Leaver.1985}. In this case linear perturbations $\chi$ \eqref{eq:23.11.18_01} satisfy
\begin{equation}
	\label{eq:23.11.07_01}
	-\sigma^2\chi = \chi''(x)+48\left(\frac{\cos{x}}{2-\cos{2x}}\right)^{2}\chi(x)
	\,.
\end{equation}
Since poles of this equation are located at $\pm 0.658479...\,i +n\pi$, $n\in\mathbb{Z}$, the power series at $x=\pi/2$ will be convergent at the origin. Imposing the condition $\chi(0)=0$ on truncations of such expansion gives us polynomials in $\sigma^{2}$, roots of which correspond to the eigenvalues. The results of these calculations are listed in Tab.~\ref{tab:RegularQNF}. They agree with the numbers obtained from the shooting method.

For Neumann BC ($b=0$) the fundamental solution is
\begin{equation}
	\label{eq:23.11.16_29}
	\PhisZero(x)=\sin{x}
	\,.
\end{equation}
In that case the eigenvalue problem \eqref{eq:23.11.18_02} yields
\begin{equation}
	\label{eq:23.11.16_30}
	\left(\sigma^{(0)}_{0}\right)^{2} = -2\,,
	\quad
	\left(\sigma^{(0)}_{j}\right)^{2} = (2j+1)^{2} - 3
	\,,
\end{equation}
$j=1,2,\ldots$, where the corresponding eigenfunctions are: $\sin{x},\, \sin{3x},\, \sin{5x}\,,\dots$. Thus, we explicitly verify the linear instability of $\PhisZero$ with a single unstable direction.

\begin{table}[t]
\centering
\begin{tabular}{|c|rrrrrr|}
\toprule
$b$& $\left(\sigma^{(0)}_{0}\right)^{2}$ & $\left(\sigma^{(0)}_{1}\right)^{2}$ & $\left(\sigma^{(0)}_{2}\right)^{2}$ & $\left(\sigma^{(0)}_{3}\right)^{2}$ & $\left(\sigma^{(0)}_{4}\right)^{2}$ & $\left(\sigma^{(0)}_{5}\right)^{2}$ \\
\midrule
$\infty$ & $-13.7711$ & $5.29633$ & $23.6731$ & $51.0478$ & $86.7383$ & $130.563$ \\
20 & $-12.8183$ & 4.97254 & 22.2906 & 48.1070 & 81.8407 & 123.373 \\
 10 & $-11.9496$ & 4.68576 & 21.1624 & 45.9244 & 78.5630 & 119.049 \\
 1 & $-4.59103$ & 3.58044 & 19.0428 & 42.9008 & 74.8435 & 114.815 \\
 0 & $-2$ & 6 & 22 & 46 & 78 & 118 \\
 $-1/2$ & $-0.501391$ & 7.72719 & 23.7326 & 47.7341 & 79.7348 & 119.735 \\
 $-2/\pi+10^{-2}$ & $-0.0380887$ & 8.14991 & 24.1542 & 48.1553 & 80.1558 & 120.156 \\
  $-2/\pi+10^{-3}$ &  $-0.00381863$ & 8.17969 & 24.1839 & 48.1850 & 80.1855 & 120.186 \\
  $-2/\pi+10^{-4}$ & $-0.000381961$ & 8.18266 & 24.1869 & 48.1880 & 80.1884 & 120.189 \\
  $-2/\pi$ &  0 & 8.18299 & 24.1872 & 48.1883 & 80.1888 & 120.189 \\
\bottomrule
\end{tabular}
\vskip 2ex
\caption{Lowest eigenvalues of linear perturbation of the static solution $\PhisZero$ for some values of the Robin parameter $b$. Note, the static solution exists only for $b\geq \bstar$. The $b=\infty$ case corresponds to the Dirichlet BC. For the Neumann BC, $b=0$, the spectrum is known explicitly \eqref{eq:23.11.16_30}. At $b=\bstar$ the static solution becomes zero and the eigenvalues are solutions of \eqref{eq:23.11.13_03}. For $b-\bstar\ll 1$ the spectrum can be computed using the perturbative expansion (see the text).}
\label{tab:RegularQNF}
\end{table}

% %%%%%%%%%%%%%%%%%%%%%%%%%%%%%%%%%%%%%%%%%%%%%%%%%%%%%%%%%%%%%%%
\subsection{Defocusing case}
\label{sec:RegularCaseDefocusingCase}
% %%%%%%%%%%%%%%%%%%%%%%%%%%%%%%%%%%%%%%%%%%%%%%%%%%%%%%%%%%%%%%%

\begin{figure}[t]
	\centering
	\includegraphics[width=1.0\textwidth]{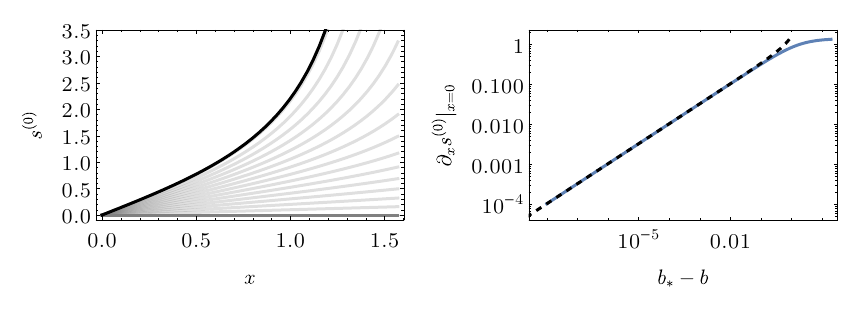}
	\caption{The static solution $\PhisZero$ for defocusing case exists only for $b<\bstar$. Left plot: the solution bifurcates from zero (dark grey line) for $b=\bstar$ and as $b$ decreases to minus infinity it converges to \eqref{eq:23.11.14_06} (black line). Right plot: the slope at the origin as a function of the Robin parameter $b$. Dashed line demonstrates agreement of the numerical results with the perturbative expansion, analogous to \eqref{eq:23.11.14_02}-\eqref{eq:23.11.14_05}, in the regime $\bstar-b\ll 1$.}
	\label{fig:RegularStaticConformalDefocusingPhiOrigin}
\end{figure}

We begin the discussion of the defocusing case with the following simple non-existence result. Let $\Phis$ be a nontrivial solution to \eqref{eq:08.11.23_1} with $\lambda=1$. We can multiply this equation by $\Phis$ and integrate the resulting expression over the domain $(0,\pi/2)$ getting
\begin{equation}
        \label{eq:08.11.23_2}
        \Phis(\pi/2)\,\Phis'(\pi/2)-\int_0^{\pi/2}\Phis'(x)^2\, \diff{x}  -\int_0^{\pi/2}\frac{\Phis^4(x)}{\sin^2 x}\, \diff{x}=0
        \,,
\end{equation}
where one of the boundary terms vanishes due to the regularity condition. If $\Phis$ satisfies Dirichlet or Neumann condition at $x=\pi/2$, then the first term in this expression vanishes. The remaining integral terms are all positive leading to the contradiction. The same reasoning also holds for the Robin boundary condition as long as $ \Phis(\pi/2)\,\Phis'(\pi/2) \leq 0$, which relates to $b$ being non-negative. The numerical indication shows that this nonexistence result can be extended to all $b>\bstar(\xh)$.

The construction and stability analysis of static solutions for the defocusing nonlinearity follow the steps used in previous section. Thus we restrict the discussion to the presentation of the main results.
Static solutions satisfying the Robin boundary condition \eqref{eq:231021_06} exist only for $b<\bstar=-2/\pi$. In such a case there exists only a node-less solution, which for consistency with notation used before we denote as $\PhisZero$. Its profile is a monotonically increasing function of $x$, see Fig.~\ref{fig:RegularStaticConformalDefocusingPhiOrigin}.
Analogously as for the $\lambda=-1$ case when $\bstar-b\ll 1$ the solution $\PhisZero$ can be constructed perturbatively and the result
is given by \eqref{eq:23.11.14_02}-\eqref{eq:23.11.14_05} with the minus sign in front of the $f_{3}$ term. In Fig.~\ref{fig:RegularStaticConformalDefocusingPhiOrigin} we plot the slope of $\PhisZero$ at the origin as a function of the Robin parameter and compare it with the perturbative solution.
As $b\rightarrow-\infty$ the solution approaches the exact singular solution
\begin{equation}
	\label{eq:23.11.14_06}
	\PhisZero_{-\infty}(x) = \sqrt{2}\tan{x}
	\,.
\end{equation}

\begin{figure}[t]
	\centering
	\includegraphics[width=0.5\textwidth]{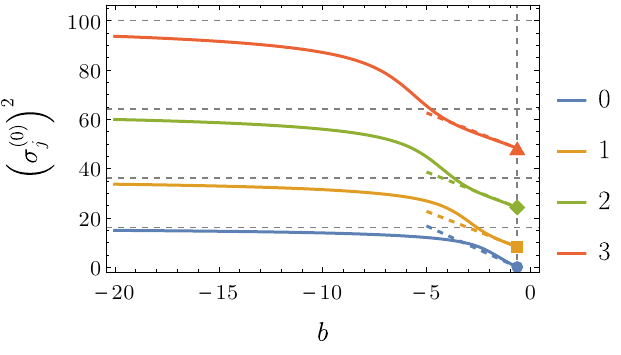}
	\caption{The dependence of the lowest eigenvalues (colour coded) of $\Phis^{(0)}$ on the Robin parameter $b$ for the defocusing case. The analogue of Fig.~\ref{fig:RegularStaticConformalFocusingEigenfrequencies}. In the limit $b\rightarrow -\infty$ we observe convergence towards eigenspectrum of the singular solution \eqref{eq:23.11.14_06}.}
	\label{fig:RegularStaticConformalDefocusingEigenfrequencies}
\end{figure}

Studying the linear perturbations of $\PhisZero$ we find only oscillatory modes ($\sigma^{2}>0$), see Fig.~\ref{fig:RegularStaticConformalDefocusingEigenfrequencies}, implying linear stability. Interestingly in the limit $b\rightarrow-\infty$ the spectrum approaches the sequence $(4+2j)^{2}$, $j=0,1,\ldots$. This could be understood when we consider linear problem around the singular solution \eqref{eq:23.11.14_06}. 
For the following analysis it is convenient to work with the original scalar field $\phi$ variable, cf. \eqref{eq:231021_05}. Note that $\phi=\sqrt{2}$, is an exact solution to \eqref{eq:231021_02} with $\lambda=-1$. A linear perturbation of this formal solution\footnote{We call it formal as it does not decay for $x\rightarrow\pi/2$; thus it becomes singular after rescaling the \eqref{eq:231021_02}.}
\begin{equation}
	\label{eq:23.11.19_01}
	\phi(t,x) = \sqrt{2} + e^{i\sigma t}\chi(x)
	\,,
\end{equation}
satisfies
\begin{equation}
	\label{eq:23.11.19_02}
	-\sigma^{2}\chi = \frac{1}{\tan^{2}{x}}\left(\tan^{2}{x}\,\chi'\right)' - \frac{4}{\cos^{2}{x}}\chi
	\,.
\end{equation}
A solution of \eqref{eq:23.11.19_02} which is smooth at $x=\pi/2$ is given in terms of the hypergeometric function \cite{NIST}
\begin{equation}
	\chi(x) = C\cos^{4}{x}\,\, _2F_1\left(2-\frac{\sigma }{2},\frac{\sigma }{2}+2;\frac{7}{2};\cos^{2}{x}\right)
   \,,
   \quad
   C\in\const
   \,,
\end{equation}
the other solution diverges as $(\pi/2-x)^{-1}$ as $x\rightarrow\pi/2$. Enforcing the regularity condition at the origin gives a condition for the eigenvalues
\begin{equation}
	\label{eq:23.11.19_03}
	\sigma^{2} = (4+2j)^{2}
	\,,
	\quad
	j=0,1,\ldots
	\,,
\end{equation}
which correspond to the limit of the spectrum of $\PhisZero$ when $b\rightarrow-\infty$.

% %%%%%%%%%%%%%%%%%%%%%%%%%%%%%%%%%%%%%%%%%%%%%%%%%%%%%%%%%%%%%%%
\section{Static solutions in SAdS ($0<\xh<\infty$)}
\label{sec:SAdS}
% %%%%%%%%%%%%%%%%%%%%%%%%%%%%%%%%%%%%%%%%%%%%%%%%%%%%%%%%%%%%%%%

% %%%%%%%%%%%%%%%%%%%%%%%%%%%%%%%%%%%%%%%%%%%%%%%%%%%%%%%%%%%%%%%
\subsection{Linear equation ($\lambda=0$)}
\label{sec:LinearEquation}
% %%%%%%%%%%%%%%%%%%%%%%%%%%%%%%%%%%%%%%%%%%%%%%%%%%%%%%%%%%%%%%%

We begin the study of static solutions in SAdS with the linear problem for equation \eqref{eq:22.09.22_09}. Let
\begin{equation}
	\Phi(v,y) = e^{i\omega v}\psi(y)
	\,.
\end{equation}
Under this separation of variables for $\lambda=0$ equation \eqref{eq:22.09.22_09} becomes
\begin{equation}\label{eq:27.10.23_01}
	2i\, \omega\, \partial_y \psi = L \psi \, ,
\end{equation}
where
\begin{equation}
	L\psi = \partial_{y}\left(y^{2}V\partial_{y}\psi\right) + y\,\partial_{y}V \psi + \frac{2}{y^{2}}\psi
	\,,
\end{equation}
while the Robin boundary condition \eqref{eq:22.09.22_11} is now given by
\begin{equation}\label{eq:27.10.23_03}
\psi'(0) - (b+i\omega) \psi(0)=0\, .
\end{equation}
Regular solutions of \eqref{eq:27.10.23_01} can be written explicitly using the Heun function\footnote{We use Wolfram Mathematica notation: $\HeunG(a,q,\alpha,\beta,\gamma,\delta,z)$ satisfies the general Heun equation: $z(z-1)(z-a)y''(z) + (\gamma  (z-1) (z-a)+\delta  z (z-a)+(z-1) z (\alpha +\beta
   -\gamma -\delta +1))y'(z) + (\alpha  \beta  z-q)y(z)=0$.} $\HeunG$ \cite{NIST}:
\begin{multline}
	\label{eq:27.10.23_02}
	\psi(x)=C\, \HeunG\left(\frac{\bar{\xi}}{\xi},\frac{2(1+\xh^2)}{\xi},1,1,1+\frac{2i\, \xh\, \omega}{3+\xh^2}, 1+\frac{\left(3+2\xh^2-i\sqrt{3+4\xh^2}\right)\xh\,\omega}{(3+\xh^2)\sqrt{3+4\xh^2}}, 
	\right.
	\\
	\left.
	\frac{2(\xh-y)(1+\xh^2)}{\xh\xi}\right)\, ,
\end{multline}
where $\xi = 3+2\xh^2+i\sqrt{3+4\xh^2}$ and $C$ is any constant. This solution can be plugged into \eqref{eq:27.10.23_03} to get a relation between $\xh$, $b$, and $\omega$. Then, for any fixed parameters $\xh$ and $b$ one can find numerically values of $\omega$ that can be interpreted as quasinormal frequencies of a zero solution. On the complex plane they form a set symmetric with respect to the imaginary axis but their locations, and as a consequence long term behaviour of perturbations of the zero solution, strongly depend on $\xh$ and $b$. For positive values of $b$ all quasinormal frequencies have non-zero real part and positive imaginary part, meaning that zero is a linearly stable solution. As $b$ decreases this situation changes, as can be seen in Fig.\ \ref{fig:Constant_xh}. At some point a pair of eigenvalues meets at the imaginary axis. The value of $b$ for which it takes place depends on $\xh$ and its dependence is shown in Fig.\ \ref{fig:PhasePlot} by the dotted line. As $b$ decreases further they separate again but remain at the imaginary axis. As these quasinormal frequencies move along the axis, eventually one of them crosses zero, meaning that there exists a solution with $\omega=0$, i.e., a zero mode. The value of $b$ for which this takes place will be denoted by $\bstar(\xh)$ and it is given by %\mm{Should we say what $\HeunG'$ is?}
\begin{equation}\label{eq:25.10.23_01}
b_*(\xh)=-\frac{2(1+\xh^2)\,\HeunG'\!\left(\frac{\bar{\xi}}{\xi},\frac{2(1+\xh^2)}{\xi},1,1,1,1,\frac{2(1+\xh^2)}{\xi}\right)}{\xh \xi \,\HeunG\left(\frac{\bar{\xi}}{\xi},\frac{2(1+\xh^2)}{\xi},1,1,1,1,\frac{2(1+\xh^2)}{\xi}\right)}\, ,
\end{equation}
where $\HeunG'$ is a derivative of the Heun function with respect of its last variable. For smaller values of $b$, one of the eigenvalues has negative imaginary part, implying instability, cf. \cite{Holzegel.2014pzh}. This change of behaviour takes place at the critical curve $b_*(\xh)$ plotted in Fig.\ \ref{fig:PhasePlot} as a solid line. Values of $b_*(\xh)$ are strictly negative, it attains a maximum value of $-0.603341$ at $\xh\approx 2.08026$, for $\xh\rightarrow 0$ it diverges to $-\infty$ as $\bstar(\xh)\sim \xh^{-1}$, while for $\xh\rightarrow \infty$ it converges to $-2/\pi$. Since this curve separates regions of the phase space where zero solution is stable and unstable, it lets us conclude that for large positive values of $b$ zero is stable, while for large negative it is generically unstable. Additionally, large black holes tend to posses stable zero solution since for any $b$ taking $\xh$ small enough lands us in the basin of stability.

\begin{figure}[t]
	\centering
	\includegraphics[width=0.5\textwidth]{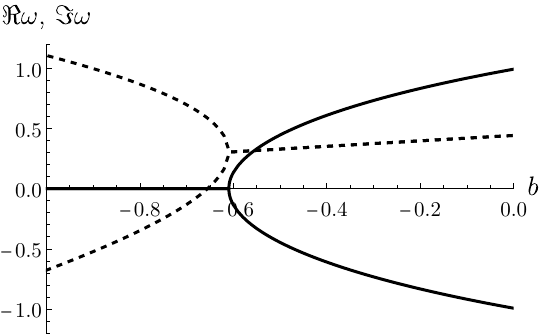}
	\caption{Plot of the behaviour of two lowest quasinormal frequencies for $\xh=1$ as $b$ varies. The solid and dashed lines present real and imaginary parts, respectively.}
	\label{fig:Constant_xh}
\end{figure}

\begin{figure}[t]
	\centering
	\includegraphics[width=0.5\textwidth]{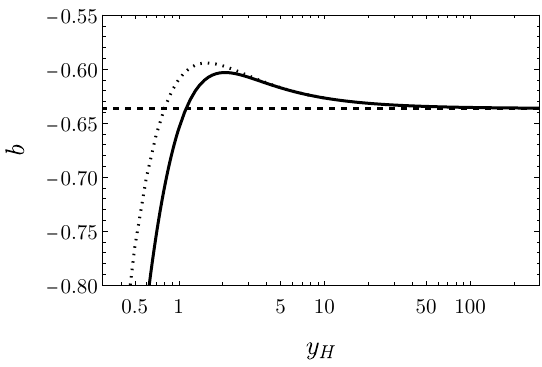}
	\caption{Plot of the phase space $(\xh, b)$. The solid line denotes the critical curve while the dotted one is the bifurcation curve. The horizontal dashed line shows the asymptotic convergence to $-2/\pi$  as  $\xh\rightarrow \infty$.}
	\label{fig:PhasePlot}
\end{figure}

The behaviour of $b_*(\xh)$ as $\xh\to 0$ is a straightforward implication of \eqref{eq:25.10.23_01} so let us now focus on the case $\xh\to\infty$. Here, the explicit formula for  $b_*(\xh)$ seems to be of little help since the asymptotic behaviour of $\HeunG$ and its derivative is, to our best knowledge, not well known in this limit. Instead, we can use the fact that \eqref{eq:231021_02} for static solutions comes not only as a result of assuming an AdS background but can also be obtained as $\xh\to\infty$ limit of equation $L\psi=0$. To see it, it is convenient to work with the original radial coordinate $r$ defined in $[r_H,\infty)$. Let us compactify this interval by introducing $x$ such that $\tan x = r-r_H$ with $x\in[0,\pi/2)$. In this new coordinate static solutions of the nonlinear problem must satisfy
\begin{equation}\label{eq:25.09.23_5a}
\frac{1}{\tilde{\mu}(x)} \frac{\diff{}}{\diff{x}}\left(\tilde{\mu}(x) \frac{\diff{\phi}}{\diff{x}}\right)+\tilde{\nu}(x)\left(2\phi-\lambda \phi^3\right)=0
	\,,
\end{equation}
where
\begin{align}\label{eq:25.09.23_6}
\tilde{\mu}(x)&=\left[1+\frac{3}{2}r_H \sin 2x +3r_H^2\cos^2 x \right](r_H+\tan x)\tan x,\\
 \tilde{\nu}(x)&=\frac{r_H+\tan x}{\sin x \cos x\left[1+\frac{3}{2}r_H \sin 2x +3r_H^2\cos^2 x \right]}
 \,.
\end{align}
This equation is regular under the limit $r_H\to 0$ and leads to
\begin{equation}\label{eq:25.09.23_5b}
\frac{1}{\tan^2 x} \frac{\diff{}}{\diff{x}}\left(\tan^2 x \frac{\diff{\phi}}{\diff{x}}\right)+\frac{1}{\cos^2 x}\left(2\phi-\lambda \phi^3\right)=0
	\,.
\end{equation}
For $\lambda=0$ we get the desired result, however, this argument gives us also a correspondence between static solutions of nonlinear equation \eqref{eq:22.09.22_09} with large $\xh$ and equation \eqref{eq:231021_02}.

Our description above regards change of behaviour as $\xh$ is fixed and one varies $b$. In the opposite situation, when $b$ is fixed and $\xh$ changes, the behaviour is more complicated and strongly depends on $b$, as can be seen in Fig.\ \ref{fig:Constant_b}. When the line of constant $b$ lays entirely above the bifurcation curve in Fig.\ \ref{fig:PhasePlot}, the imaginary part of the lowest quasinormal frequency is positive and decreases monotonically as $\xh$ increases. If this line crosses the bifurcation curve, the imaginary part of the frequency bifurcates at some $\xh$. When the line additionally crosses the critical curve, one of the emerging branches becomes at some point negative. The further behaviour of the imaginary part of the lowest frequency depends on whether $b>-2/\pi$ or $b<-2/\pi$. In the first case, two branches eventually merge and converge to some positive values as $\xh\to\infty$. Otherwise, they stay separated and one of them remains negative.

\begin{figure}[t]
	\centering
	\includegraphics[width=1\textwidth]{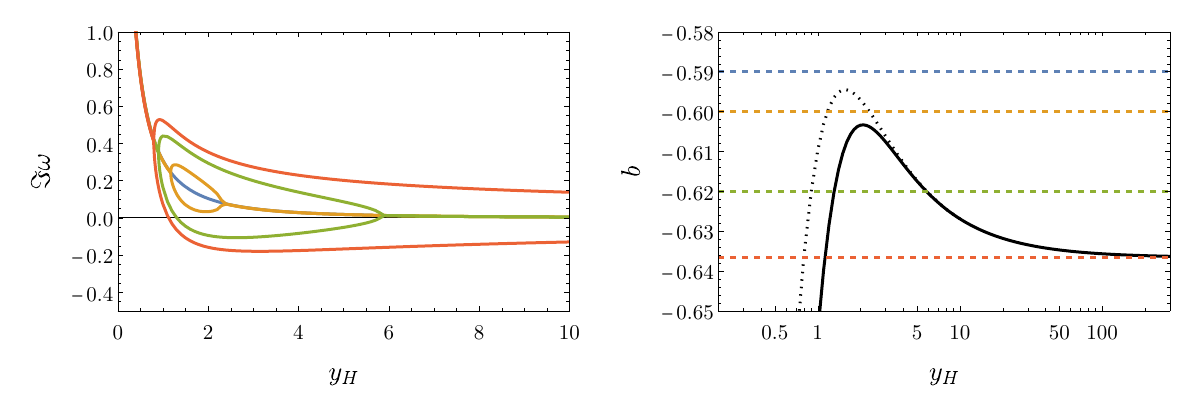}
	\caption{Plot of the behaviour of the imaginary part of lowest quasinormal frequencies for various $b$ ($b=-0.59$, $-0.6$, $-0.62$, and $-\pi/2$) as $\xh$ varies.}
	\label{fig:Constant_b}
\end{figure}

% %%%%%%%%%%%%%%%%%%%%%%%%%%%%%%%%%%%%%%%%%%%%%%%%%%%%%%%%%%%%%%%
\subsection{Focusing case ($\lambda=-1$)}
\label{sec:FocusingCase}
% %%%%%%%%%%%%%%%%%%%%%%%%%%%%%%%%%%%%%%%%%%%%%%%%%%%%%%%%%%%%%%%

% %%%%%%%%%%%%%%%%%%%%%%%%%%%%%%%%%%%%%%%%%%%%%%%%%%%%%%%%%%%%%%%
\subsubsection{Existence}
\label{sec:ExistenceStaticSolutionsFocusingCase}
% %%%%%%%%%%%%%%%%%%%%%%%%%%%%%%%%%%%%%%%%%%%%%%%%%%%%%%%%%%%%%%%

Since the shooting method is one of the main tools that we use in the investigations of static solutions to \eqref{eq:22.09.22_09} we begin by showing that it is well posed.  By putting $\Phi(v,y)=\Phis(y)$ in \eqref{eq:22.09.22_09} we get
\begin{equation}
        \label{eq:06.09.23_1}
        \frac{\diff{}}{\diff{y}}\left(\mu(y)\, \frac{\diff{\Phis(y)}}{\diff{y}}\right)-y\left(1+\xh^2\right)\Phis(y)-\lambda\, \xh^3 \Phis(y)^3=0
        \,,
\end{equation}
where
\begin{equation}
        \label{eq:06.09.23_2}
        \mu(y)=\xh^3+x^2 \xh^3-y^3\left(1+\xh^2\right)
        \,,
\end{equation}
is a non-negative function. Since $\mu$ behaves near $y=\xh$ like $\mathcal{O}(\xh-y)$, this equation is singular there. Hence, one need to impose the appropriate regularity conditions: if $\Phis(\xh)=c$, then 
\begin{equation}
        \label{eq:13.09.23_2}
        \Phis'(\xh)=-\frac{1+\xh^2 \left( 1+\lambda\, c^2\right)}{\xh \left(3+\xh^2\right)}\, c
        \,.
\end{equation}
This condition ensures local existence of the solution.
\begin{lem}\label{thm:25.09.23_1}
Equation \eqref{eq:06.09.23_1} with $\Phis(\xh)=c$ and $\Phis'(\xh)$ given by \eqref{eq:13.09.23_2} constitutes a well defined initial value problem: for every $c\in \mathbb{R}$ it has a unique local solution on some interval $(\xh-\varepsilon,\xh)$ and this solution depends continuously on $c$ and $\xh$.
\end{lem}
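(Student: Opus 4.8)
The plan is to recast the singular boundary-value data at the horizon as a fixed-point problem for an integral operator and invoke the Banach contraction principle. The whole difficulty is that the leading coefficient $\mu$ in \eqref{eq:06.09.23_1} has a \emph{simple} zero at $y=\xh$: one computes $\mu(\xh)=0$ while $\mu'(\xh)=-\xh^{2}(3+\xh^{2})\neq 0$, so $\mu(t)\sim \xh^{2}(3+\xh^{2})(\xh-t)$ as $t\to\xh^{-}$ and $\mu>0$ on $(\xh-\varepsilon,\xh)$ for small $\varepsilon$. Because of this, dividing \eqref{eq:06.09.23_1} by $\mu$ to put it in normal form produces a non-Lipschitz $1/\mu$ factor and the classical Picard--Lindelöf theorem does not apply directly; the regularity condition \eqref{eq:13.09.23_2} is precisely the consistency relation obtained by evaluating \eqref{eq:06.09.23_1} at $y=\xh$, and it selects the one branch of solutions that remains bounded there.

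First I would integrate the equation twice. Writing $N[\Phis](t):=t(1+\xh^{2})\Phis(t)+\lambda\,\xh^{3}\Phis(t)^{3}$ and integrating \eqref{eq:06.09.23_1} from $y$ to $\xh$, the boundary term $\mu(\xh)\Phis'(\xh)$ drops out since $\mu(\xh)=0$ (and $\Phis'$ stays bounded for a regular solution), giving $\mu(y)\Phis'(y)=-\int_y^{\xh}N[\Phis](t)\,\diff t$. A second integration using $\Phis(\xh)=c$ yields the integral equation
\[
\Phis(y)=c+\int_y^{\xh}\frac{1}{\mu(t)}\left(\int_t^{\xh}N[\Phis](\tau)\,\diff\tau\right)\diff t=:\mathcal{T}[\Phis](y).
\]
Since $N[\Phis]$ is bounded on a sup-norm ball, the inner integral is $\bigO{\xh-t}$, and as $(\xh-t)/\mu(t)$ stays bounded near $\xh$ the $1/\mu$ factor is integrable; hence $\mathcal{T}$ is well defined on $C([\xh-\varepsilon,\xh])$. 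Differentiating $\mathcal{T}[\Phis]$ and letting $y\to\xh$ (an elementary limit using $\mu'(\xh)$) reproduces exactly \eqref{eq:13.09.23_2}, so the fixed points of $\mathcal{T}$ are precisely the regular local solutions.

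Next I would run the contraction argument on the ball $B_R=\{\Phis\in C([\xh-\varepsilon,\xh]):\|\Phis-c\|_\infty\le R\}$ with the supremum norm. On $B_R$ the cubic nonlinearity $N$ is bounded and locally Lipschitz with constants depending only on $R,c,\xh$; together with the bound on $(\xh-t)/\mu(t)$ this gives $\|\mathcal{T}[\Phis]-c\|_\infty\le C\varepsilon$ and $\|\mathcal{T}[\Phi_1]-\mathcal{T}[\Phi_2]\|_\infty\le C'\varepsilon\,\|\Phi_1-\Phi_2\|_\infty$. Fixing $R$ and then taking $\varepsilon$ small makes $\mathcal{T}$ a self-map of $B_R$ and a strict contraction, so Banach's theorem produces a unique fixed point, i.e.\ a unique local solution. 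Conversely, any solution obeying \eqref{eq:13.09.23_2} with bounded derivative solves the integral equation and therefore coincides with this fixed point; this also rules out the second, logarithmically singular homogeneous branch at $y=\xh$, settling uniqueness.

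Finally, for continuous dependence on $c$ and $\xh$ I would substitute $z=\xh-y$ so that the singular endpoint sits at $z=0$ for every $\xh$ and the domain $[0,\varepsilon]$ is fixed, removing the awkwardness of a moving singularity. Then $\mathcal{T}=\mathcal{T}_{c,\xh}$ depends continuously on the parameters, and for $(c,\xh)$ in a compact neighbourhood of a given point one can choose $R$ and $\varepsilon$ uniformly so that the contraction constant is bounded below $1$; the standard lemma on parameter-dependence of fixed points of a uniform family of contractions then yields continuity of $\Phis$ in $(c,\xh)$. I expect the main obstacle to be exactly the singular factor $1/\mu$: the argument works only because the inner integral vanishes at the same linear rate as $\mu$, so the apparent singularity is removable and $\mathcal{T}$ is genuinely contracting; the secondary technical point is controlling the moving singularity when establishing continuity in $\xh$, which the substitution $z=\xh-y$ resolves.
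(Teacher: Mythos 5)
Your argument is correct, but it follows a genuinely different route from the paper. The paper does not set up an integral operator at all: it introduces the square-root substitution $t=\sqrt{1-y/\xh}$ together with the amplitude rescaling $w=\Phis/\sqrt{p(t)}$, where $p$ is a strictly positive quartic, which converts the simple zero of $\mu$ at $y=\xh$ into the radially symmetric two-dimensional Laplacian form $\tfrac{1}{t}\bigl(t\,w'\bigr)'=\tfrac{1}{p^{2}}\bigl(qw+4\lambda\xh^{2}w^{3}\bigr)$ with singular point $t=0$, data $w(0)=\sqrt{3+\xh^{2}}\,c$, $w'(0)=0$, and then simply cites standard results (Coddington--Levinson, Ni, Hastings) for existence, uniqueness and continuous dependence for this class of problems. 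Your approach works directly in the original variable: you exploit that $\mu'(\xh)=-\xh^{2}(3+\xh^{2})\neq 0$ so that the inner integral's $\bigO{\xh-t}$ vanishing cancels the $1/\mu$ singularity, making $\mathcal{T}$ a well-defined contraction on a small sup-norm ball; this is essentially the proof of the standard results the paper invokes, transplanted to the unscaled equation. What the paper's substitution buys is the ability to quote the literature verbatim and a variable $t$ that is reused in the subsequent global-existence lemma; what your version buys is a self-contained argument that makes explicit both the mechanism by which the regularity condition \eqref{eq:13.09.23_2} is forced (it is the limit of $-\mu(y)^{-1}\int_{y}^{\xh}N$ as $y\to\xh$) and the exclusion of the logarithmically singular second branch, and your $z=\xh-y$ device for the moving endpoint gives a clean path to continuity in $\xh$. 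Both proofs are sound; no gap.
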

\begin{proof}
Let us introduce new variables:
\begin{equation}
        \label{eq:13.09.23_1a}
        t:=\sqrt{1-\frac{y}{\xh}}
        \,,
        \qquad
        w(t):=\frac{\Phis(y)}{\sqrt{p(t)}}
        \,,
\end{equation}
where
\begin{equation}
        \label{eq:13.09.23_4}
p(t)=3+\xh^2- t^2\left(3+2\xh^2\right) +t^4\left(1+\xh^2\right)\,,
\end{equation}
is strictly positive for $t\in[0,1]$. Now \eqref{eq:06.09.23_1} can be written as
\begin{equation}
        \label{eq:13.09.23_1b}
        \frac{1}{t} \frac{\diff{}}{\diff{t}} \left(t \, \frac{\diff{w}}{\diff{t}} \right)- \frac{1}{p(t)^2}\left[ q(t) w+4\lambda\, \xh^2 w^3\right]=0
        \,,
\end{equation}
where
\begin{equation}
        \label{eq:13.09.23_3}
q(t)=2\left(3+\xh^2\right)-\left(9+8\xh^2\right)t^2+2\left(1+\xh^2\right)t^4
\,,
\end{equation}
is a polynomial with $q(0)=2\left(3+\xh^2\right)>0$, $q(1)=-\left(1+4\xh^2\right)<0$ and a single zero in between. In these variables the differential operator present in our equation becomes a radially-symmetric two-dimensional laplacian with $t=0$ being the singular point. The regularity condition now transforms to simple $w'(0)=0$ while the relation between $\Phis$ and $w$ leads to $w(0)=\sqrt{3+\xh^2}\, c$ for $\Phis(\xh)=c$. The lemma follows from standard results regarding this type of problems \cite{coddington1955theory, ni1985uniqueness, hastings2011classical}.

\end{proof}

% %%%%%%%%%%%%%%%%%%%%%%%%%%%%%%%%%%%%%%%%%%%%%%%%%%%%%%%%%%%%%%%
In the focusing case the global existence is also ensured. It implies that to every value of $c$ we can  assign some $b$. It can be either finite or infinite, in the case of solution satisfying Dirichlet BC at $y=0$.
\begin{lem}
For $\lambda=-1$ solutions given by Lemma \ref{thm:25.09.23_1} can be extended to the whole interval $(0,\xh)$.
\end{lem}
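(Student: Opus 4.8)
The plan is to rule out finite-$y$ blow-up by an a~priori energy estimate, so that the standard continuation theorem forces the solution to extend all the way to $y=0$. Recall from \eqref{eq:06.09.23_2} that $\mu(y)=\xh^{3}y^{2}V(y)$ is strictly positive on $[0,\xh)$ and vanishes only at the horizon $y=\xh$; hence every point of $[0,\xh)$ is a regular point of \eqref{eq:06.09.23_1}, with smooth coefficients and with $\mu$ bounded below on compacta. Consequently, if the maximal left-extension of the solution from Lemma~\ref{thm:25.09.23_1} reached only some $y_{0}\in[0,\xh)$, then necessarily $|\Phis(y)|+|\Phis'(y)|\to\infty$ as $y\to y_{0}^{+}$. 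It therefore suffices to bound $\Phis$ and $\Phis'$ uniformly on each compact subinterval of $[0,\xh)$.

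To this end I would introduce the functional
\[
F(y)=\tfrac12\,\mu(y)\,\Phis'(y)^{2}+\tfrac{\xh^{3}}{4}\,\Phis(y)^{4}\ge 0,
\]
in which the focusing sign $\lambda=-1$ is essential: it makes the quartic term enter with a confining (positive) sign, so that $F$ controls $|\Phis|\lesssim F^{1/4}$ and, on any interval where $\mu\ge m>0$, also $|\Phis'|\lesssim F^{1/2}$. Differentiating and using \eqref{eq:06.09.23_1} to eliminate $\Phis''$, the quartic contributions cancel and one is left with
\[
F'(y)=-\tfrac12\,\mu'(y)\,\Phis'(y)^{2}+y\left(1+\xh^{2}\right)\Phis(y)\,\Phis'(y).
\]

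Fix $y_{1}$ inside the interval of local existence provided by Lemma~\ref{thm:25.09.23_1}, and work on the compact interval $[0,y_{1}]$, where $\mu\ge m>0$ and $|\mu'|\le M$ for constants $m,M$ depending only on $\xh$ and $y_{1}$. Combining the two bounds above with Young's inequality to absorb $F^{3/4}$ into $F$, the identity for $F'$ yields a differential inequality of the form $|F'|\le \alpha F+\beta$ on $[0,y_{1}]$, with constants $\alpha,\beta$. Gronwall's lemma, applied in the direction of decreasing $y$, then bounds $F$ on all of $[0,y_{1}]$; hence $\Phis$ and $\Phis'$ remain finite there, the blow-up alternative is excluded, and the solution extends to $[0,y_{1}]$. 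Since $y_{1}$ may be taken arbitrarily close to $\xh$, this gives extension to the whole of $(0,\xh)$.

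The only genuine obstacle is that $F$ is \emph{not} monotone: both the wrong-sign quadratic coupling $y(1+\xh^{2})\Phis\Phis'$ and the factor $\mu'$, which changes sign inside $(0,\xh)$, spoil any attempt at a clean monotonicity argument. The point is that these terms are truly lower order once the confining quartic is in hand, so they can be absorbed by Gronwall. This is exactly where $\lambda=-1$ enters: for the defocusing sign the quartic in $F$ would be negative and the estimate would fail, consistent with the singular limiting profile \eqref{eq:23.11.14_06} found in the defocusing case.
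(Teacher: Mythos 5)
Your argument is correct and is essentially the paper's own proof: both control the solution by a nonnegative energy functional whose derivative, after the cubic/quartic contributions cancel, is bounded by the functional itself, and then apply Gronwall together with the standard continuation criterion on the regular interval away from the horizon. The only difference is cosmetic --- the paper first changes variables to $t=\sqrt{1-y/\xh}$ and $u(t)$ and includes a quadratic term $2u^{2}$ in its functional $V$ so that $V'\le CV$ directly, whereas you stay in the original variables and absorb the cross term via $|\Phis\,\Phis'|\lesssim F^{3/4}\le 1+F$.
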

\begin{proof}
By Lemma \ref{thm:25.09.23_1} we know that $w$ exists in some interval $(0,2\varepsilon)$ so let us focus here on the interval $I=(\varepsilon,1)$. Equation \eqref{eq:13.09.23_1b} can be reformulated by introducing new variable
\begin{equation}
        \label{eq:05.10.23_1}
u(t)=\frac{t^{1/4}}{\sqrt{t^4+3\left(1-t^2\right)+\left(1-t^2\right)^2 \xh^2}}w(t)\,,
\end{equation}
so it becomes
\begin{equation}
        \label{eq:05.10.23_2}
\frac{\diff{}}{\diff{t}}\left(a(t)\, \frac{\diff{u}}{\diff{t}}\right) + r(t)u+4\xh^2 u^3=0\,,
\end{equation}
where
\begin{align}
        \label{eq:05.10.23_3}
a(t)=\sqrt{t} \left(t^4+3\left(1-t^2\right)+\left(1-t^2\right)^2 \xh^2\right)\,,
\end{align}
is positive and $r$ is continuous in $I$. For any solution $v(t)$ we can define the following functional
\begin{align}
        \label{eq:05.10.23_4}
V(t)=a(t)u'^2+2u^2+2\xh^2 u^4\,.
\end{align}
It is non-negative and its derivative is
\begin{align}
        \label{eq:05.10.23_5}
V'(t)&=a'(t)u'^2+2a(t)u''u'+4uu'+8\xh^2 u^3u'=-a'(t)u'^2+\left[4-2r(t)\right]u u'\,,
\end{align}
where we have used the equation of motion \eqref{eq:05.10.23_2}. Let us fix $M$ and $m$ such that $|a'(t)|<M$ and $0<m<a(t)$ for $t\in I$, then we can bound $V'(t)$ as follows:
\begin{align*}
V'(t)&\leq M\, u'^2+\left[2-r(t)\right]\left(u^2+ u'^2\right)\leq \left(M+2+|r(t)|\right)u'^2 +\left(2+|r(t)|\right)u^2\\
&\leq \frac{M+2+|r(t)|}{m} \, a(t) u'^2 +\left(1+\frac{|r(t)|}{2}\right)2u^2 \\
&\leq  \left(1+\frac{|r(t)|}{2}+\frac{M+2+|r(t)|}{m}\right)\left( a(t) u'^2+2u^2\right)\\
&\leq\left(1+\frac{|r(t)|}{2}+\frac{M+2+|r(t)|}{m}\right)V(t)\,.
\end{align*}
Since $r(t)$ is bounded in $I$, it gives us an upper bound on the rate of increase of $V$ implying existence of $V$ bounded by some $K$ in the whole interval. Since 
\begin{align}
        \label{eq:05.10.23_7}
m u'^2 + 2 u ^2 \leq a(t) u'^2 + 2 u^2 \leq V(t) \leq K\,,
\end{align}
we get
\begin{align}
        \label{eq:05.10.23_8}
u \leq \sqrt{\frac{K}{2}}\,, \qquad u'\leq \sqrt{\frac{K}{m}}\,.
\end{align}
By going back to the original variable $\Phis$, together with Lemma \ref{thm:25.09.23_1} we conclude existence of the solution in the interval $(0,\xh)$.
\end{proof}

% %%%%%%%%%%%%%%%%%%%%%%%%%%%%%%%%%%%%%%%%%%%%%%%%%%%%%%%%%%%%%%%
\subsubsection{Construction}
\label{sec:ConstructionStaticSolutionsFocusingCase}
% %%%%%%%%%%%%%%%%%%%%%%%%%%%%%%%%%%%%%%%%%%%%%%%%%%%%%%%%%%%%%%%

\begin{figure}[t]
	\centering
	\includegraphics[width=1\textwidth]{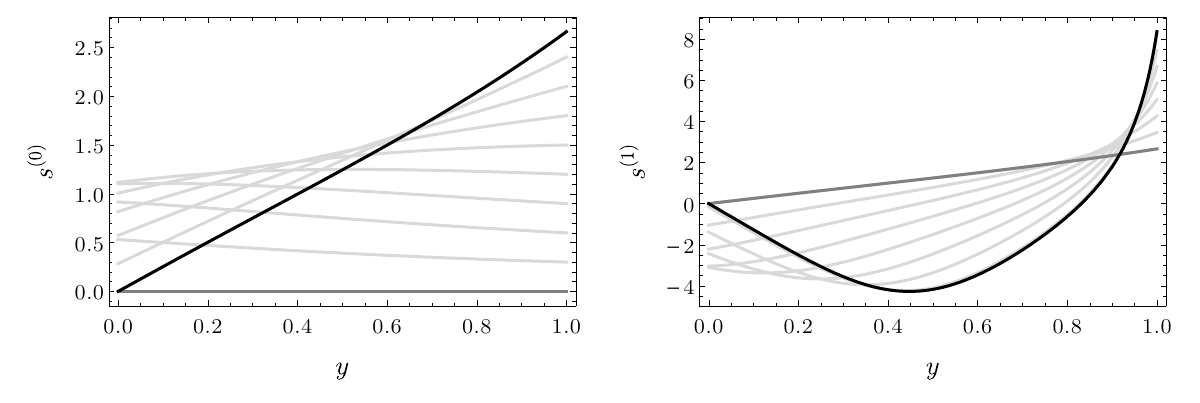}
	\caption{Profiles of static solutions $\Phis^{(0)}(y)$ and $\Phis^{(1)}(y)$ with horizon radius $\xh=1$ and focusing nonlinearity for different values of the Robin parameter, cf. Fig.~\ref{fig:StaticConformalFocusingGlobal} for the horizonless case. Left: the solution $s^{(0)}$ bifurcates from zero for $b=\bstar(\xh)$ and tends to the solution satisfying Dirichlet BC (black line) as $b\rightarrow\infty$. Note the node-less solution does not exist in the region $b<\bstar(\xh)$ (see the text). Right: the first excited state interpolates between two static solutions satisfying the Dirichlet BC (dark grey and black lines) as $b$ goes from minus to plus infinity.
	}
	\label{fig:StaticConformalFocusingPhiProfiles}
\end{figure}

We construct static solutions numerically using the shooting technique which proceeds as follow. 
For $\Phi(v,y)=\Phis(y)$ the equation \eqref{eq:22.09.22_09} (with $\lambda=-1$) becomes:
\begin{equation}
	\label{eq:23.10.25_01}
	- \partial_{y}\left(y^{2}V\partial_{y}\Phis\right) - \left(y\,\partial_{y}V + \frac{1}{2y^{2}}\right)\Phis - \Phis^{3} = 0
	\,.
\end{equation}
A regular local solution at the horizon $y=\xh$ satisfies
\begin{equation}
	\label{eq:23.10.25_02}
	\Phis(y) = f + \frac{f \left(\left(f^2-1\right) \xh^2-1\right)}{\xh
   \left(\xh^2+3\right)}(y-\xh) + \mathcal{O}\left((y-\xh)^{2}\right)
   \,,
\end{equation}
where $f$ is a free parameter which uniquely determines the solution. As proved above local solutions extend smoothly to $y\in[0,\xh]$. Choosing $f$ we integrate \eqref{eq:23.10.25_01} starting at $y=\xh$ toward $y=0$ with data which follows from \eqref{eq:23.10.25_02}. As $y=\xh$ is a regular point of the equation, for any $f$ we get a solution satisfying the Robin BC $\left(\partial_{y}\Phis-b\Phis\right)|_{y=0}=0$ with some $b=b(f)$. For small $f$ the solution is monotonic, but as we increase $f$ solution becomes oscillatory and it oscillates finite number of times in the interval $y\in[0,\xh]$.

By adjusting $f$ we can construct solutions with prescribed $b$. It turns out that using this procedure we find that, for a given value of the Robin parameter $b$, there exist infinitely many solutions $\Phis^{(n)}$, where $n$ is non-negative integer (nodal index) which enumerates the number of nodes in the solution. However, the node-less solution $n=0$ exists only for $b>\bstar(\xh)$.

Small node-less solutions can be also constructed perturbatively, analogously to the regular case, see Sec.~\ref{sec:RegularCaseFocusingCase}. Since linear problem with finite $\xh$ has solutions expressed by the Heun function, this time we are not able to write explicitly formula for $f_3$ in expansion \eqref{eq:23.11.14_02}. However, this function and solutions in higher orders can be found numerically.

Fig.~\ref{fig:StaticConformalFocusingPhiProfiles} illustrates the lowest static solutions, $\Phis^{(0)}$ and $\Phis^{(1)}$, for $\xh$ fixed when $b$ changes.  
Increasing $b$ solutions approach the zero value at the conformal boundary ($y=0$) and in the limit $b\rightarrow\infty$ they converge to the respective solutions satisfying the Dirichlet BC.
Alternatively, keeping $b$ fixed and increasing $\xh$ (small BH), the solutions start to resemble the respective solutions of the horizon-less case, and in fact approach them in the limit $\xh\rightarrow\infty$ (cf. the regular case in Sec.~\ref{sec:RegularCase}). Whereas for $\xh\rightarrow 0$ (large BH) the solutions grow in magnitude as $1/\xh$ such that $\xh\Phis$ approaches the limiting solution on the appropriately scaled interval. This limit is discussed in Sec.~\ref{eq:StaticSolutionsLargeBH}.

% %%%%%%%%%%%%%%%%%%%%%%%%%%%%%%%%%%%%%%%%%%%%%%%%%%%%%%%%%%%%%%%
\subsubsection{Linear stability}
\label{sec:LinearStabilityStaticSolutionsFocusingCase}
% %%%%%%%%%%%%%%%%%%%%%%%%%%%%%%%%%%%%%%%%%%%%%%%%%%%%%%%%%%%%%%%

Next, to determine the role of static solutions $\Phis^{(n)}(y)$ in dynamics we study their linear stability. 
Therefore we write
\begin{equation}
	\label{eq:23.10.26_01}
	\Phi(v,y) = \Phis(y) + e^{i\sigma v}\chi(y)
	\,.
\end{equation}
Plugging this into the equation \eqref{eq:22.09.22_09} and neglecting nonlinear terms in $\chi$ we obtain the eigenvalue problem
\begin{equation}
	\label{eq:23.10.26_02}
	2i\sigma\chi = \partial_{y}\left(y^{2}V\partial_{y}\chi\right) + \left(y\,\partial_{y}V - \frac{1}{2y^{2}}\right)\chi + 3\Phis^{2}\chi = 0
	\,,
\end{equation}
subject to the boundary condition 
\begin{equation}
	\label{eq:23.10.26_03}
	\left.\left(- i\sigma \chi + \partial_{y}\chi - b \chi\right)\right|_{y=0} = 0
	\,,
\end{equation}
which follows from \eqref{eq:22.09.22_11}.

Local solutions of \eqref{eq:23.10.26_02} at $y=\xh$ behave as
\begin{equation}
	\label{eq:23.10.26_04}
	\chi(y) = h+h\frac{\left(3 f^2-1\right) \xh-\xh^{-1}}{\xh^2+2 i \sigma \xh+3} (y-\xh)+\mathcal{O}\left((y-\xh)^2\right)
	\,,
	\quad
	h=\chi(\xh)
	\,,
\end{equation}
with $h=\chi(\xh)$ an arbitrary complex constant, and since at $y=0$ the equation is regular the solution is smooth there. To find a solution of \eqref{eq:23.10.26_02} satisfying \eqref{eq:23.10.26_03} one could again use the shooting technique, this time on a complex plane. Given a guess for $\sigma$ one can start the integration from $y=\xh$ with regularity conditions \eqref{eq:23.10.26_04}. It is clear that the condition \eqref{eq:23.10.26_03} with fixed value of $b$ can be satisfied only for a discrete set of $\sigma$. Such solution, a QNM, by definition is an outgoing solution at the black hole horizon. Although this procedure can be made semi-automatic, we have found that it is increasingly difficult to accurately determine modes with large $|\Im\sigma|$ (both higher overtones and unstable modes). Thus, as an alternative approach to find QNMs we use the method of \cite{Bizoń.2020sul}, which after a suitable discretisation turns \eqref{eq:23.10.26_02} and \eqref{eq:23.10.26_03} into an algebraic eigenvalue problem. Another advantage of this approach is that it gives us the whole spectrum of QNMs without the need to provide a guess as for the shooting method. We have checked that the two approaches provide consistent results.

\begin{figure}[t]
	\centering
	\includegraphics[width=0.47\textwidth]{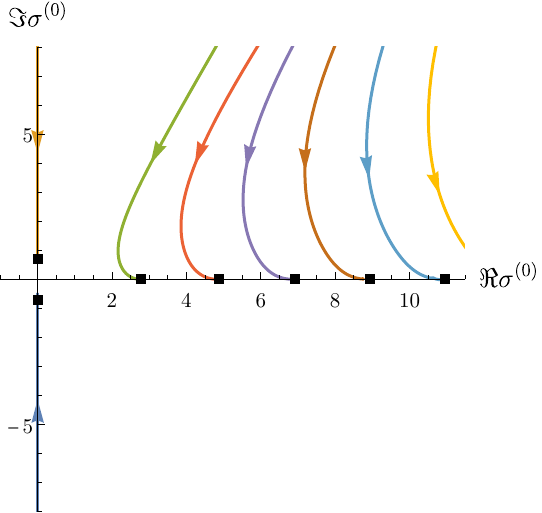}
	\hspace{4ex}
	\includegraphics[width=0.47\textwidth]{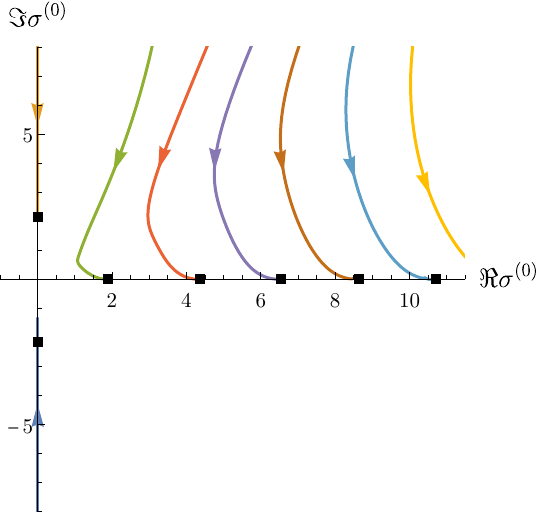}
	\caption{The lowest QNMs of $\Phis^{(0)}$ (focusing case) as a function of $\xh$ for fixed $b$: $-1/2$ (left plot) and $1$ (right plot). When $\xh$ increases (indicated by arrows) oscillatory modes approach the real axis, and in the limit $\xh\rightarrow\infty$ they converge to the corresponding QNMs of the horizonless solution (squares). The modes on the imaginary axis also approach the corresponding modes of the horizonless solution, though this approach is not monotonic, see Fig.~\ref{fig:StaticConformalFocusingQNMUnstable} for details for the unstable mode $\sigma^{(0)}_{-1}$.}
	\label{fig:StaticConformalFocusingQNM}
\end{figure}

\begin{figure}[t]
	\centering
	\includegraphics[width=0.47\textwidth]{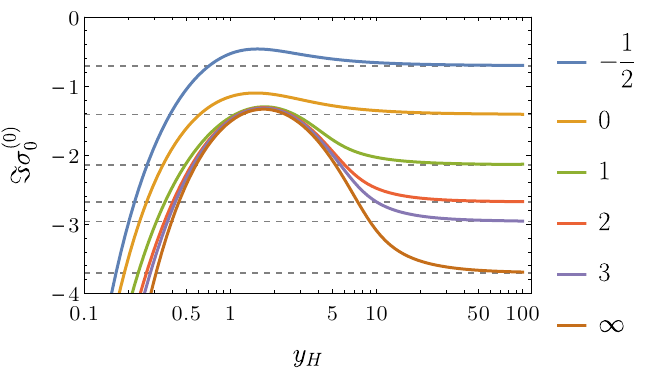}
	\hspace{4ex}
	\includegraphics[width=0.47\textwidth]{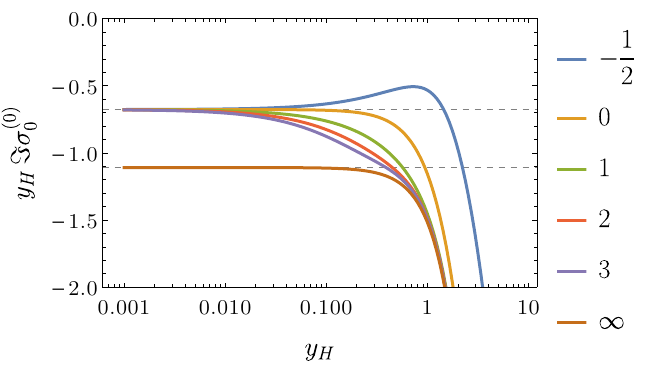}
	\caption{The unstable mode $\sigma^{(0)}_{0}$ as a function of $\xh$ for fixed $b=-1/2, 0, 1, 2, 3$ and $\infty$ (Dirichlet BC). Left plot: as $\xh\rightarrow\infty$ we see convergence to the unstable mode of the corresponding horizonless solution (horizontal dashed lines). Right plot: for $\xh\rightarrow 0$ the product $\xh \sigma^{(0)}_{0}$ approaches the eigenmodes of the solution $S^{(0)}(z)$ of \eqref{eq:23.10.25_05}, with either Robin or Dirichlet BC for $b<\infty$ and $b=\infty$ respectively, see discussion in Sec.~\ref{eq:StaticSolutionsLargeBH}.}
	\label{fig:StaticConformalFocusingQNMUnstable}
\end{figure}

\begin{figure}[t]
	\centering
	\includegraphics[width=1.0\textwidth]{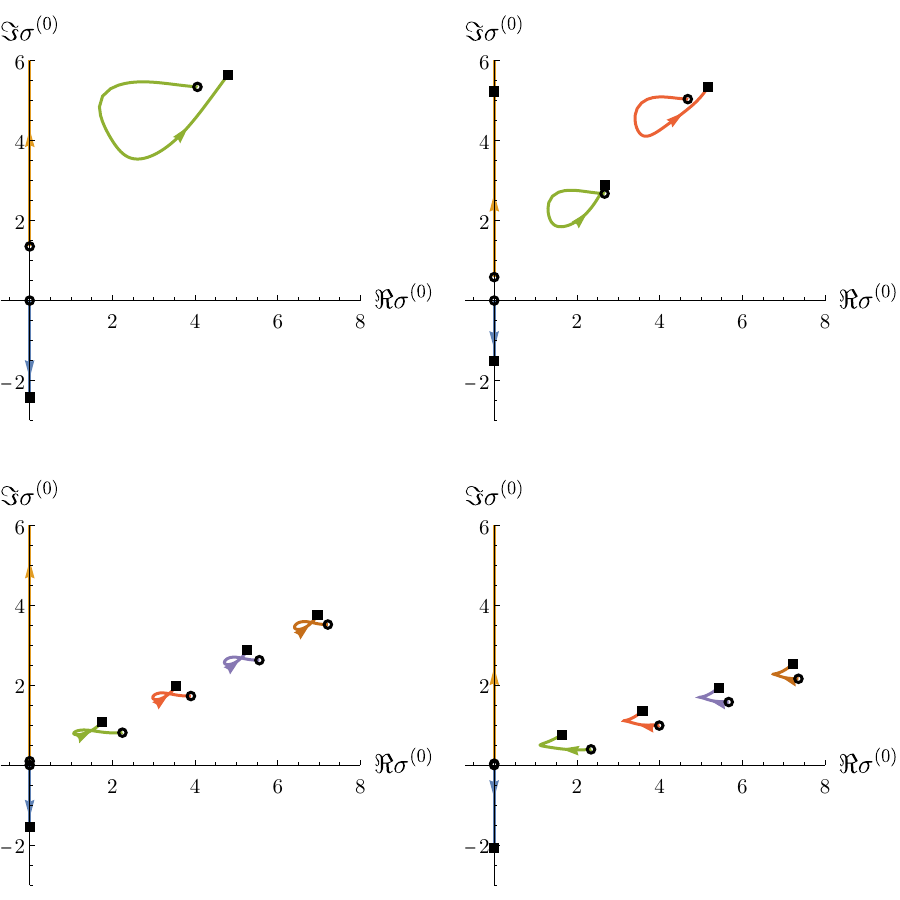}
	\caption{The lowest QNMs of $\Phis^{(0)}$ (focusing case) for fixed $\xh=1/2$, $1$, $3$, and $5$ (from left to right and from top to bottom) as a function of $b$. As $b$ increases from $\bstar(\xh)$ to infinity (indicated by arrows) we observe convergence towards the QNMs spectrum of the problem with Dirichlet BC (squares). Whereas when $b$ approaches $\bstar(\xh)$ from above the spectrum $\sigma^{(0)}_{j}$ converges to the eigenfrequencies $\omega_{j}$ of the trivial solution $\Phi=0$ (circles).}
	\label{fig:StaticConformalFocusingQNM2}
\end{figure}

The linear stability analysis strongly suggests that solution $\Phis^{(n)}$ has exactly $n+1$ unstable modes\footnote{We use the convention consistent with \eqref{eq:23.11.18_20}.}
\begin{equation}
	\label{eq:23.10.30_01}
	\Im\sigma^{(n)}_{-n}<\Im\sigma^{(n)}_{-(n-1)},\ldots,<\Im\sigma^{(n)}_{0} < 0
	\,.
\end{equation}
This implies that all nontrivial static solutions are linearly unstable. Beside the negative modes \eqref{eq:23.10.30_01} the solution  $\Phis^{(n)}$ has $n+1$ purely damped modes $\Im\sigma>0$, $\Re\sigma=0$ and an infinite number of (stable) oscillatory modes $\Im\sigma>0$, $\Re\sigma\neq 0$.

Convergence of the solution profiles when $\xh\rightarrow\infty$ and $b\rightarrow \infty$ (independently) to the horizonless solutions and solutions satisfying Dirichlet BC respectively, is transferred to the convergence of the QNMs spectrum. Also, the behaviour of the spectrum when $\xh\rightarrow 0$ can be understood by considering the linear perturbations of the limiting solutions $S^{(n)}(z)$, see Sec.~\ref{eq:StaticSolutionsLargeBH}. Anticipating the results of the nonlinear evolution we discuss in some detail the dependence of the QNMs spectrum of the node-less solution $\PhisZero$ on the parameters $(\xh,b)$.

First consider fixed $b$ above the critical line, where $\Phis^{(0)}$ does exist. As $\xh\rightarrow\infty$ we observe convergence of the spectrum $\sigma^{(0)}$ to the modes of the the horizonless solution. When $\xh\rightarrow 0$ the modes grow in magnitude as $\xh^{-1}$. Close inspection of the rescaled spectrum, $\xh\sigma^{(n)}$, reveals that it converges in the limit $\xh\rightarrow 0$ to the linear spectrum of the solutions \eqref{eq:23.10.25_05}, and that this limit is universal for Robin BC with $b<\infty$ (but it differs from the Dirichlet BC, i.e., the $b=\infty$ case).
Next, taking the limit $b\rightarrow\infty$ with $\xh$ fixed the spectrum converges to the respective spectrum with the Dirichlet BC. On the other extreme when we approach the critical curve $\bstar(\xh)$ from above we observe that the spectrum of $\Phis^{(0)}$ approaches the spectrum of $\Phi=0$, from which $\Phis^{(0)}$ bifurcates.
This behaviour of the QNMs of $\Phis^{(0)}$ is illustrated in Fig.~\ref{fig:StaticConformalFocusingQNM}-\ref{fig:StaticConformalFocusingQNM2}.

In the regime where $\Phi=0$ is linearly stable $b>\bstar(\xh)$ it is expected that for small initial data this solution will act as an attractor, whereas large data will blowup. In Sec.~\ref{sec:Dynamics} we provide a numerical evidence that $\Phis^{(0)}$ separates these two scenarios. Interestingly, in the $b<\bstar(\xh)$ case none of the static solutions is linearly stable, including the trivial solution $\Phi=0$. Therefore, it is natural to expect that arbitrarily small generic initial data will end up blowing up. Details of this unstable dynamics are presented in Sec.~\ref{sec:SmallbDynamicsFocusingCase}.

% %%%%%%%%%%%%%%%%%%%%%%%%%%%%%%%%%%%%%%%%%%%%%%%%%%%%%%%%%%%%%%%
\subsection{Defocusing case ($\lambda=1$)}
\label{sec:StaticDefocusingCase}
% %%%%%%%%%%%%%%%%%%%%%%%%%%%%%%%%%%%%%%%%%%%%%%%%%%%%%%%%%%%%%%%
Using the same method as in Sec.~\ref{sec:RegularCase}, one can show that for defocusing nonlinearity there are no static solutions with $b\geq 0$. Multiplying \eqref{eq:06.09.23_1} by $s$ and integrating it over the interval $[0,x_H]$ here leads to
\begin{equation}
        \label{eq:06.09.23_3}
        -s(0)s'(0)x_H^3-\int_0^{x_H}\mu(x)s'(x)^2\, \diff{x} -\int_0^{x_H}x(1+x_H^2)s(x)^2\, \diff{x} - x_H^3\int_0^{x_H}s(x)^4\, \diff{x}   =0\,,
\end{equation}
where we have used the fact that $\mu(x_H)=0$ and $\mu(0)=x_H^3$. As before, this identity cannot hold for $b\geq 0$ leading to the nonexistence. Similarly as in the regular case, it is only a partial result: the nonexistence of static solutions seems to hold for all $b>\bstar(\xh)$.

\begin{figure}[t]
	\centering
	\includegraphics[width=0.47\textwidth]{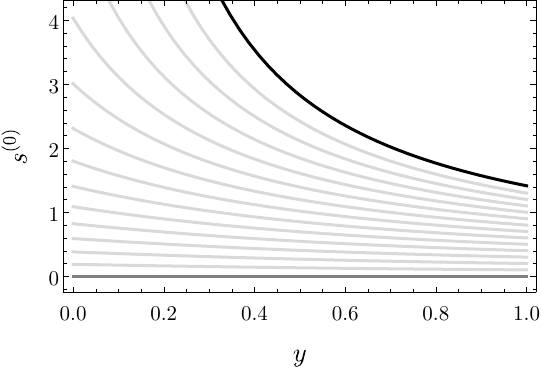}
	\hspace{4ex}
	\includegraphics[width=0.47\textwidth]{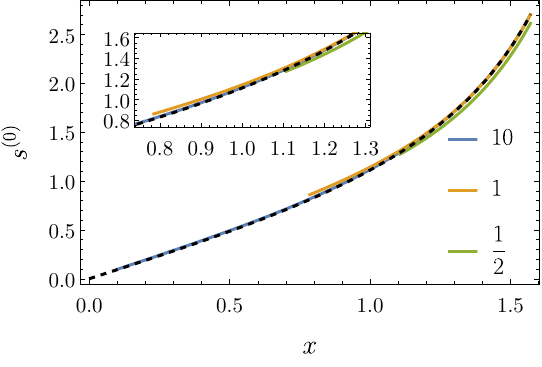}
	\caption{The static solution $\PhisZero$ for defocusing case exists only for $b<\bstar(\xh)$. Left plot: solution bifurcates from zero (dark grey line) for $b = \bstar(\xh)$ and as $b$ decreases to minus infinity it converges to $\sqrt{2}/y$ (black line) independently of $\xh$ (here $\xh=1$) cf. Fig.~\ref{fig:RegularStaticConformalDefocusingPhiOrigin} for regular case. Right plot: for fixed $b$ and $\xh$ going to infinity the solution approaches the corresponding solution of the regular problem (dashed line). Here we plot the results for $b=-2$ and $\xh=1/2,1,10$ (colour coded) with respect the global coordinate $x$ introduced in Sec.~\ref{sec:RegularCase}. Note that for finite $\xh$ the range of $x$ is $[\arctan(1/\xh),\pi/2]$.}
	\label{fig:StaticConformalDefocusingPhi}
\end{figure}

For each $b<\bstar(\xh)$ there exists a single static solution $\Phis^{(0)}(y)$, which is a monotonically increasing node-less function. The profile of that solution, and in particular its dependence on $\xh$ and $b$ is presented in Fig.~\ref{fig:StaticConformalDefocusingPhi}. In the two extreme cases $\xh$ fixed and $b\rightarrow -\infty$ and $b$ fixed and $\xh\rightarrow\infty$ this solution converges respectively to the singular solution $\sqrt{2}/y$ (irrespectively of $\xh$) and the regular solution $s^{(0)}(x)$ studied in detail in Sec.~\ref{sec:RegularCaseDefocusingCase}. Analogously to the focusing case, close to the critical curve $b=\bstar(\xh)-\varepsilon$ ($0<\varepsilon\ll 1$), one can compute the solution $\Phis^{(0)}$ and study its linear stability perturbatively in $\varepsilon$. For the same reasons as before, we skip presenting this calculation.

The linear stability analysis of $\Phis^{(0)}(y)$ is analogous to the procedure for the focusing case presented in Sec.~\ref{sec:LinearStabilityStaticSolutionsFocusingCase}. We find that the spectrum $\sigma^{(0)}$ consists of stable modes only, i.e. $\Im\sigma^{(0)}>0$, which implies linear stability of $\Phis^{(0)}(y)$. Recall that in the region $b<\bstar(\xh)$ the zero solution is linearly unstable $\Im\omega<0$. Thus, we have a classical pitchfork bifurcation at $b=\bstar(\xh)$, an analogue of the self-gravitating case \cite{Bizoń.2020}. 
Although the structure of QNMs depends in a nontrivial way on both $\xh$ and $b$, e.g. see Figs.~\ref{fig:StaticConformalDefocusingQNM} and \ref{fig:StaticConformalDefocusingQNM2}, the behaviour for fixed $\xh$ and $b\rightarrow-\infty$ and also for fixed $b$ with $\xh\rightarrow\infty$ can be readily understood based on the behaviour of the static solution itself.

For $(\xh,b)$ close to the critical curve $\bstar(\xh)$ the QN frequencies $\sigma^{(0)}$ are close to the spectrum of the trivial solution $\omega$.
In particular, one of the two modes on the imaginary axis starts at the origin for $b=\bstar(\xh)$. 
For fixed $b<-2/\pi$ and $\xh$ increasing from  $\xh^{-1}(\bstar=b)$ to infinity we see smooth transition form $\omega$ to the spectrum of the regular solution with the same Robin parameter. For $-2/\pi<b<\bstar(\xh)$ we obtain $\omega$ (different for different $\xh$) on both ends of the interval allowed for $\xh$.
Alternatively, with $\xh$ fixed the QNMs move smoothly from $\omega$ for $b=\bstar(\xh)$ towards the spectrum of the singular solution $\sqrt{2}/y$ as $b$ goes to $-\infty$, c.f. Fig.~\ref{fig:StaticConformalDefocusingQNM2}. The linear perturbation of this solution is described by the equation
\begin{multline}
        \label{eq:16.11.23_1}
       2i \xh^3 \omega\chi' = (\xh-y)\left[(1+\xh^2)y^2+\xh y+\xh^2\right]\chi''
        \\
        +\left[2\xh^3 y-3(1+\xh^2)y^2\right]\chi'-\left[(1+\xh^2)y+\frac{6\xh^2}{y^2}\right]\chi\,.
\end{multline}
The spectrum $\omega$ can be then found numerically with the help of the Leaver's-type method, similarly as for \ \eqref{eq:23.11.07_01}, by expanding $\chi$ into a series around $y=\xh$ and demanding that the coefficients of the expansion converge to zero. However the rate of convergence of this method degrades when $\xh$ increases. This follows form the fact that the singularities in the equation approach $\pm i$, which become close the disc of convergence $|w-\xh|\leq \xh$, $w\in\mathbb{C}$. Therefore for large $\xh$ we employ the algebraic method, see \cite{Bizoń.2020sul}. 

\begin{figure}[t]
	\centering
	\includegraphics[width=0.47\textwidth]{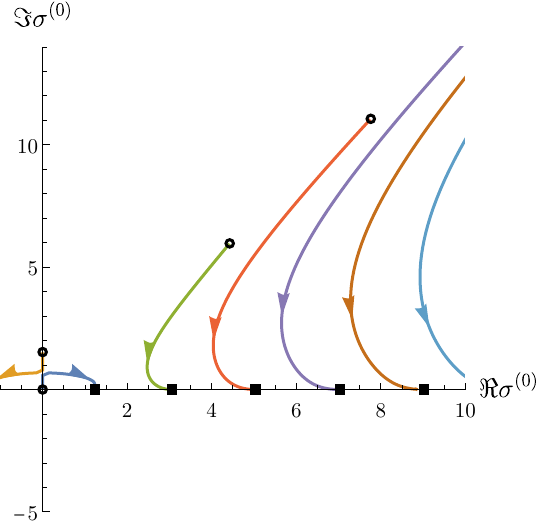}
	\hspace{4ex}
	\includegraphics[width=0.47\textwidth]{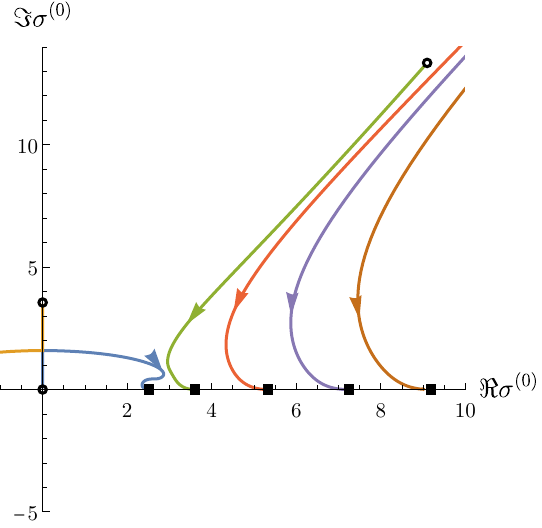}
	\caption{The lowest QNMs of $\Phis^{(0)}$ (defocusing case) for fixed $b=-1$ (left plot) and $b=-2$ (right plot) as a function of $\xh\in[\xh^{-1}(\bstar=b),\infty)$. Modes start at the critical line from the modes of the trivial solution (circles). As $\xh$ increases (indicated by arrows) the QNMs approach the real axis, and in the limit $\xh\rightarrow\infty$ they converge to the corresponding QNMs of the horizonless solution (squares).}
	\label{fig:StaticConformalDefocusingQNM}
\end{figure}

\begin{figure}[t]
	\centering
	\includegraphics[width=1.0\textwidth]{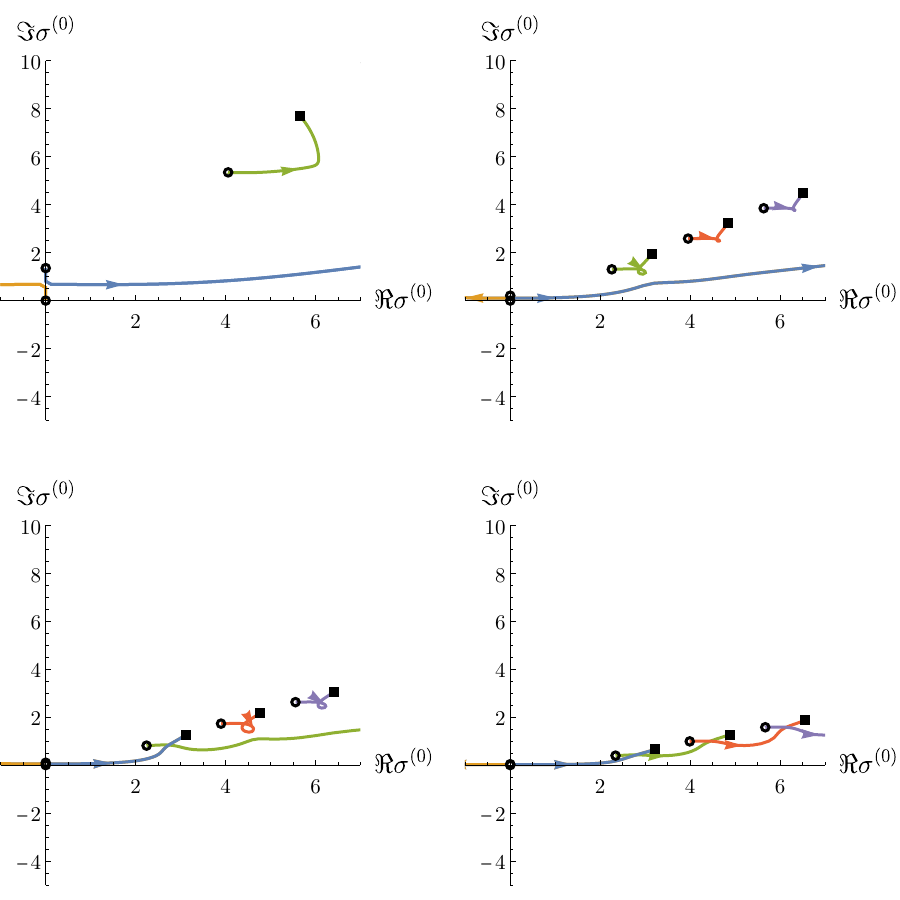}
 	\caption{The lowest QNMs of $\Phis^{(0)}$ (defocusing case) for fixed $\xh=1/2$, $2$, $3$, and $5$ (from left to right and from top to bottom) as a function of $b$. When $b$ decreases (indicated by arrows) from $\bstar(\xh)$ to $-\infty$ modes go from the spectrum of the trivial solution $\omega$ (circles) towards the spectrum of the singular solution $\sqrt{2}/y$ (squares), see \eqref{eq:16.11.23_1}. 	
 	}
	\label{fig:StaticConformalDefocusingQNM2}
\end{figure}

% %%%%%%%%%%%%%%%%%%%%%%%%%%%%%%%%%%%%%%%%%%%%%%%%%%%%%%%%%%%%%%%
\section{Static solutions in SAdS ($\xh\rightarrow 0$)}
\label{eq:StaticSolutionsLargeBH}
% %%%%%%%%%%%%%%%%%%%%%%%%%%%%%%%%%%%%%%%%%%%%%%%%%%%%%%%%%%%%%%%

In this section we discuss the limit $\xh\rightarrow 0$. Since static solutions with $b$ fixed and $\xh\rightarrow 0$ do not exist in the defocusing case, see Sec.~\ref{sec:StaticDefocusingCase}, here we consider only the focusing nonlinearity $\lambda=-1$. First, introduce rescaled radial coordinate
\begin{equation}
	\label{eq:23.10.25_03}
	z = 1- \frac{2}{\xh}y\,, \quad z\in [-1,1]
	\,, 
\end{equation}
and a new dependent variable
\begin{equation}
	\label{eq:23.10.25_04}
	S(z) := \xh\,\Phis\left(\frac{\xh}{2}(1-z)\right)
	\,.
\end{equation}
Then, making this change of variables in \eqref{eq:23.10.25_01} and expanding around $\xh=0$ we get in the leading order
\begin{equation}
	\label{eq:23.10.25_05}
-(1+z)\left(z^2-4z+7\right) S''(z)-3 (z-1)^2 S'(z)+(1-z) S(z) -2 S(z)^3 = 0
	\,.
\end{equation}
Note that under \eqref{eq:23.10.25_03} and \eqref{eq:23.10.25_04} the Robin BC with finite $b$ becomes the Neumann BC when $\xh\rightarrow 0$ 
\begin{equation}
	\label{eq:23.10.25_06}
	\left.\frac{\partial_{z}S}{S}\right|_{z=1} = -\frac{\xh}{2}b\ \xrightarrow{\xh\rightarrow 0}\ 0
	\,,
	\quad
	\textrm{for}
	\ 
	b<\infty
	\,.
\end{equation}
Thus, we look for regular solutions of \eqref{eq:23.10.25_05} with either the Dirichlet BC ($S(1)=0$) or the Neumann BC ($S'(1)=0$).
The shooting procedure is analogous to the $\xh>0$ case. It turns out that the equation \eqref{eq:23.10.25_05} has a countable family of solutions $S^{(n)}$, $n=0,1,\ldots$, both for $S(1)=0$ and $S'(1)=0$, see Fig.~\ref{fig:StaticConformalFocusingLimitingSolutions}, and each member of these families is a limiting solution for the rescaled solutions with $\xh>0$ as $\xh\rightarrow 0$ with respective boundary conditions. This fact is visualised in Fig.~\ref{fig:StaticConformalFocusingLimitingSolutionsConvergence}.

When trying to investigate the linear stability of these solutions with standard ansatz of the type \eqref{eq:23.10.26_01} one encounters problem when taking the limit $\xh\to 0$, namely, the term with supposed frequency $\sigma$ vanishes. This is consistent with the fact that for small $\xh$ the calculated values of $\sigma$ become very large, as can be seen in Fig.\ \ref{fig:StaticConformalFocusingQNM}. To deal with this problem, one can rescale the frequency $\Sigma=\xh\sigma$ used in the ansatz. Then the linear perturbation of \eqref{eq:23.10.25_05} is described by equation
\begin{equation}
	\label{eq:23.11.16_02}
8i\Sigma \chi' = -(1+z)\left(z^2-4z+7\right)\chi''-3(1-z)^2\chi'+(1-z-6S^2)\chi
\,.
\end{equation}
Its spectrum can be found with the shooting method as before, this time by constructing solutions starting at the singular point $z=-1$ and looking for $\Sigma$ for which they satisfy Dirichlet or Neumann BC at $z=1$. First of the BC is given as usual, by $\chi(1)=0$, while the second one is $2\chi'(1)+i\sigma\chi(1)=0$. The resulting values of $\Sigma$ for the node-less solution $S^{(0)}$ are presented in Table \ref{tab:xhzeroQNF}.

\begin{table}[t]
\centering
\begin{tabular}{|c|cccc|}
\toprule 
BC & $\Sigma^{(0)}_{0}$ & $\Sigma^{(0)}_{1}$ & $\Sigma^{(0)}_{2}$ & $\Sigma^{(0)}_{3}$ \\
\midrule
Neumann & $-0.67864i$ & $1.64333i$ & $\pm1.23933+2.71982i$ & $\pm2.81564+4.94251 i $ \\
Dirichlet & $-1.11063i$ & $ 2.89057i$  & $\pm2.28453 + 
 2.77389i$ & $\pm4.11190 + 4.93960i$  \\
 \bottomrule
\end{tabular}
\vskip 2ex
\caption{Lowest eigenvalues of linear perturbation of the static solution $S^{(0)}$ for Neumann and Dirichlet BC after rescaling by $\xh$. Frequencies of the unstable modes can be compared with Fig.~\ref{fig:StaticConformalFocusingQNMUnstable}.}
\label{tab:xhzeroQNF}
\end{table}

\begin{figure}[t]
	\centering
	\includegraphics[width=0.47\textwidth]{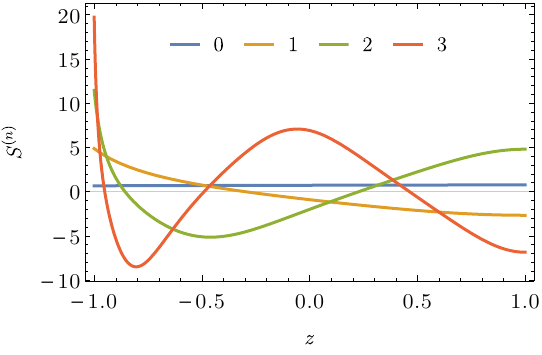}
	\hspace{4ex}
	\includegraphics[width=0.47\textwidth]{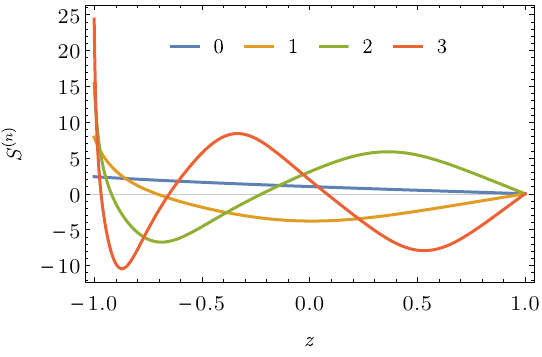}
	\caption{First four solutions $S^{(n)}$ of the limiting equation \eqref{eq:23.10.25_05}. Shown are solutions satisfying the Neumann (left plot) and Dirichlet (right plot) boundary conditions.}
	\label{fig:StaticConformalFocusingLimitingSolutions}
\end{figure}

\begin{figure}[t]
	\centering
	\includegraphics[width=0.47\textwidth]{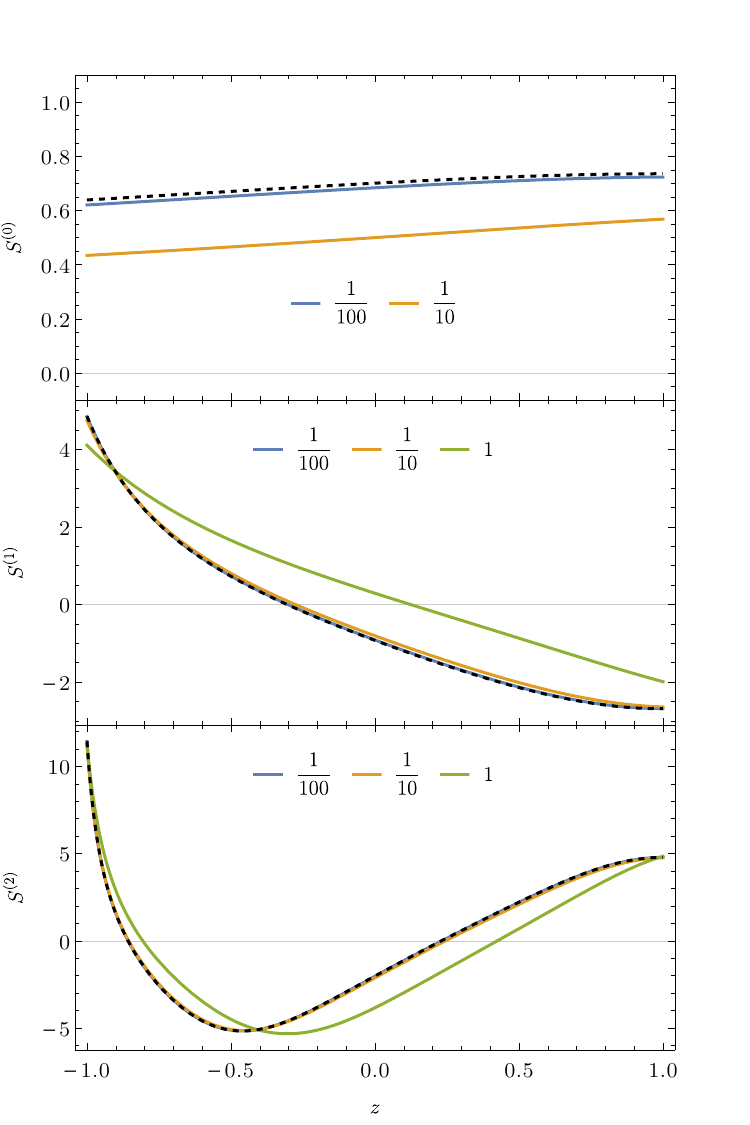}
	\hspace{4ex}
	\includegraphics[width=0.47\textwidth]{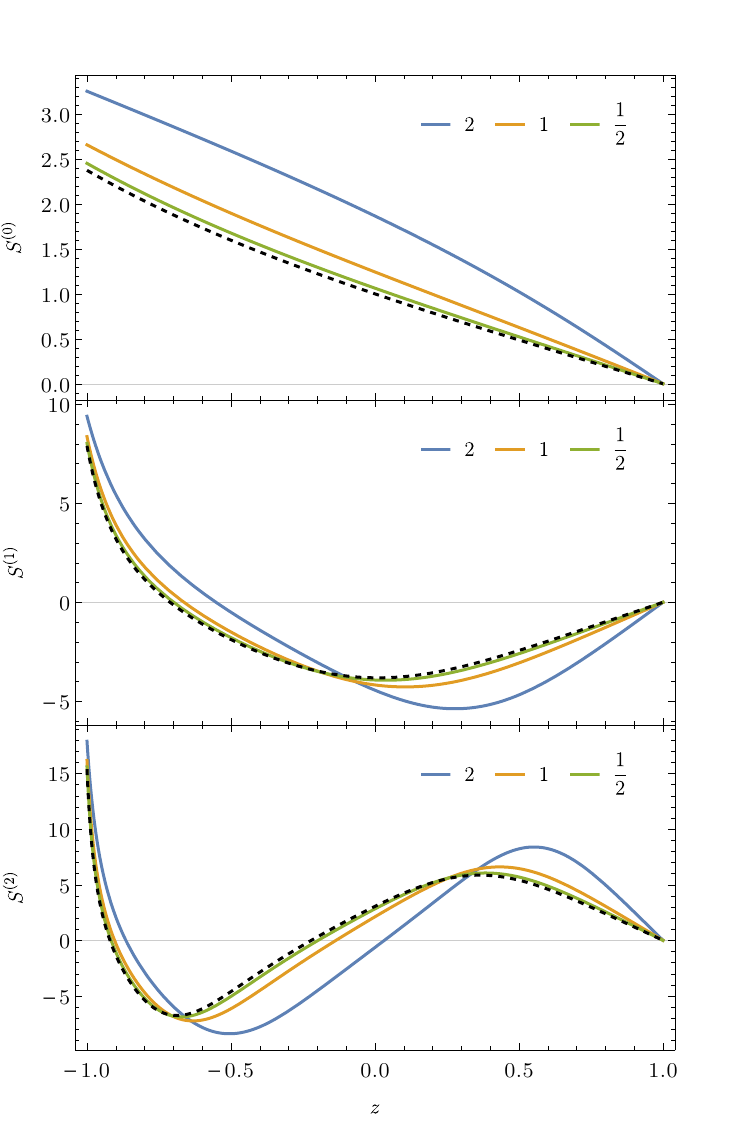}
	\caption{The convergence of rescaled solutions $\Phis\left(\frac{\xh}{2}(1-z)\right)\xh$ towards the respective solutions $S(z)^{(n)}$ of \eqref{eq:23.10.25_05} (dashed black line) in the limit $\xh\rightarrow 0$ (large BH). We show both the case with the Robin BC $b=-2$ (left plot) and the Dirichlet BC (right plot). The colour encodes the value of $\xh$.}
	\label{fig:StaticConformalFocusingLimitingSolutionsConvergence}
\end{figure}

% %%%%%%%%%%%%%%%%%%%%%%%%%%%%%%%%%%%%%%%%%%%%%%%%%%%%%%%%%%%%%%%
\section{Dynamics ($0<\xh<\infty$)}
\label{sec:Dynamics}
% %%%%%%%%%%%%%%%%%%%%%%%%%%%%%%%%%%%%%%%%%%%%%%%%%%%%%%%%%%%%%%%

% %%%%%%%%%%%%%%%%%%%%%%%%%%%%%%%%%%%%%%%%%%%%%%%%%%%%%%%%%%%%%%%
\subsection{Focusing case $(\lambda=-1)$}
\label{sec:DynamicsFocusingCase}
% %%%%%%%%%%%%%%%%%%%%%%%%%%%%%%%%%%%%%%%%%%%%%%%%%%%%%%%%%%%%%%%

As anticipated from the linear stability analysis the time evolution will depend on the values of $(\xh,b)$, in particular whether, for a given $\xh$, we take $b<\bstar(\xh)$ or $b>\bstar(\xh)$. However, within the respective regimes demarcated by the critical curve $\bstar(\xh)$ the dynamics is qualitatively the same. Below, we discuss the two cases $b<\bstar(\xh)$ and $b>\bstar(\xh)$ separately.

Before presenting the results, we describe the procedure used to solve the initial-boundary value problem \eqref{eq:22.09.22_09}-\eqref{eq:22.09.22_11} numerically.
For numerical convenience, we use the rescaling \eqref{eq:23.10.25_03} of the independent variable, which transforms \eqref{eq:22.09.22_09} into
\begin{equation}
	\label{eq:23.11.03_01}
	\partial_{z}\partial_{v}\Phi - \frac{\xh}{4}(1-z)\partial_{z}V(z)\Phi + \frac{\xh}{4}\partial_{z}\left((1-z)^{2}V(z)\partial_{z}\Phi\right) + \frac{2}{\xh(1-z)^{2}}\Phi - \lambda \frac{\xh}{4}\Phi^{3} = 0
	\,,
\end{equation}
where
\begin{equation}
	\label{eq:23.11.03_02}
	V(z) = 1 + \frac{4}{\xh^{2}(1-z)^{2}} - \frac{1}{2}\left(1+\frac{1}{\xh^{2}}\right)(1-z)
	\,,
\end{equation}
and the Robin BC \eqref{eq:22.09.22_11} becomes
\begin{equation}
	\label{eq:23.11.03_03}
	\left.\left(-\partial_{v}\Phi - \frac{2}{\xh}\partial_{z}\Phi - b\Phi\right)\right|_{z=1} = 0
	\,.
\end{equation}
To integrate the equations \eqref{eq:23.11.03_01}-\eqref{eq:23.11.03_03} we use the method of lines with standard 4th order Runge-Kutta time stepping and the Chebyshev pseudo-spectral discretisation in space. 
To get an explicit form of the evolution equations we solve \eqref{eq:23.11.03_01} for $\partial_{v}\Phi$ by inverting the $z$ derivative subject to the boundary condition \eqref{eq:23.11.03_03}. In practice, we use a Chebyshev grid of the second kind (Chebyshev-Gauss-Lobatto points), including boundary points $z=\pm 1$, and replace the equation at the $z=1$ grid point by the discrete version of \eqref{eq:23.11.03_03}. The inversion of the resulting square matrix is done by the LU decomposition algorithm.
The very same approach can be used when dealing with the Dirichlet BC $\Phi|_{z=1}=0$.

To get better results parts of the calculations are carried with extended numerical precision (typically 32-64 decimal digits).

We have experimented with several classes of initial conditions, but for all of them we got qualitatively the same outcomes. Below, we present numerical results which use
\begin{equation}
	\label{eq:23.11.03_04a}
	\Phi(0,z) = \amp\exp\left(-\frac{16}{(1-z^{2})^{2}}+16\right)
	\,,
\end{equation}
where $\amp\in\mathbb{R}$ is an adjustable parameter.

% %%%%%%%%%%%%%%%%%%%%%%%%%%%%%%%%%%%%%%%%%%%%%%%%%%%%%%%%%%%%%%%
\subsubsection{$b>\bstar(\xh)$}
\label{sec:LargebDynamicsFocusingCase}
% %%%%%%%%%%%%%%%%%%%%%%%%%%%%%%%%%%%%%%%%%%%%%%%%%%%%%%%%%%%%%%%

\begin{figure}[t]
	\centering
	\includegraphics[width=0.47\textwidth]{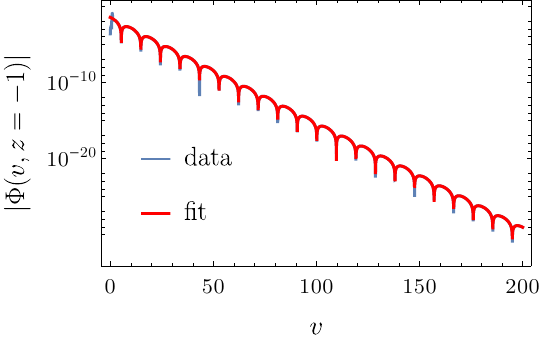}
	\hspace{4ex}
	\includegraphics[width=0.47\textwidth]{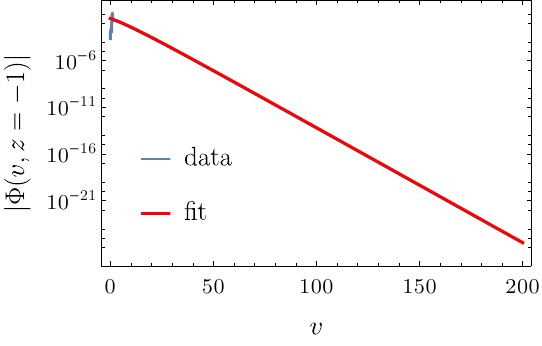}
	\caption{Evolution of small initial data, with amplitude $\amp=1/10$ in \eqref{eq:23.11.03_04a}, for $\xh=1$ and $b=-0.55$ (left plot) and $b=-0.61$ (right plot), for which we observe asymptotic decay to zero (we plot point-wise decay at the horizon $z=-1$). The late time dynamics is dominated by the lowest damped QNM, see Sec.~\ref{sec:LinearEquation}. Fits agree with the linear stability analysis of up to 10-12 significant digits. For the $b=-0.61$, which is below the bifurcation curve, see discussion in Sec.~\ref{sec:LinearEquation}, it was necessary to fit both purely imaginary modes which have similar exponents, while fitting just one exponential failed to produce correct results for the considered time interval.}
	\label{fig:EvolConformalFocusingSmallData}
\end{figure}

\begin{figure}[t]
	\centering
	\includegraphics[width=0.5\textwidth]{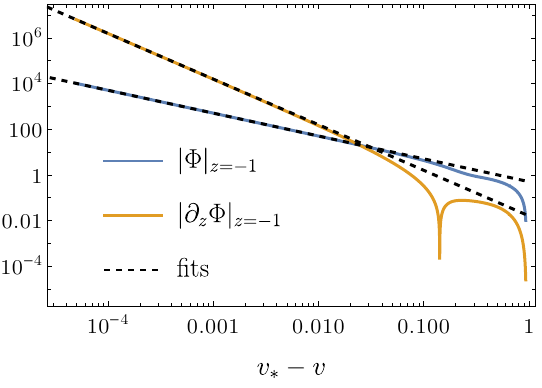}
	\caption{Finite-time blowup of large initial data ($\amp=3$ in \eqref{eq:23.11.03_04a}) observed at the horizon $z=-1$ (the $\xh=1$, $b=-0.55$ case). Plotted are the scalar field and its gradient as a function of the distance to the singularity $v_{*}\approx 0.944358$. These quantities grow as $(v_{*}-v)^{-1}$ and $(v_{*}-v)^{-2}$ respectively (dashed lines).}
	\label{fig:EvolutionConformalFocusingDetailsOfBlowup}
\end{figure}

For small initial perturbations (small $\amp$ in \eqref{eq:23.11.03_04a}) we observe convergence toward the zero solution. Depending on the combination of $(\xh,b)$ the exponential decay of the solution can be oscillatory or not. The late time behaviour is governed by the dominant QNM of $\Phi=0$ (these were discussed in Sec.~\ref{sec:LinearEquation}). This behaviour is illustrated in Fig.~\ref{fig:EvolConformalFocusingSmallData}, which also provides an independent verification of the computed QNMs in the linear stability analysis.

For large $\amp$'s, we see the solution blowing up in finite time. Growth is the fastest on the horizon, located at $z=-1$ on the numerical grid \eqref{eq:23.11.03_01}. 
The behaviour of solution as $v$ approaches the blowup time $v_{*}$ can be characterised by the following growth rates:
\begin{equation}
	\label{eq:23.11.03_04}
	\Phi(v,z=-1) \sim \frac{C_{0}}{v_{*}-v}
	\,, 
	\quad
	\partial_{z}\Phi(v,z=-1) \sim \frac{C_{1}}{(v_{*}-v)^{2}}
	\,,
\end{equation}
where numbers $C_{0},\ C_{1}\in\mathbb{R}$ are initial data dependent constants. Those scalings are shown in Fig.\ref{fig:EvolutionConformalFocusingDetailsOfBlowup}

\begin{figure}[t]
	\centering
	\includegraphics[width=1.0\textwidth]{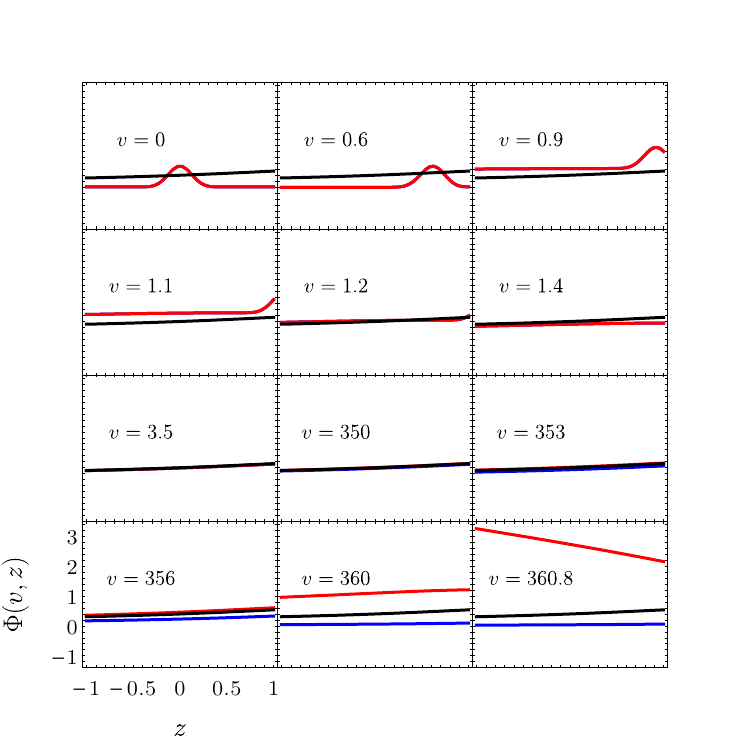}
	\caption{Time evolution of marginally sub- and supercritical data. The static solution (black) is an attractor for the finely tuned data for intermediate times. Subcritical (blue) and supercritical (red) data are indistinguishable up to $v\approx 350$, but later, they either converge towards zero or blow up to infinity, respectively. Here, we show the $\xh=1$, $b=-0.55$ case. The nearly critical evolution looks qualitatively the same for other parameters with $b>\bstar(\xh)$.}
	\label{fig:EvolutionConformalFocusingEvolutionSubSuperFrames}
\end{figure}

Performing a bisection between decay to zero and blowup we find that the static solution $\PhisZero$ (possessing a single unstable mode) acts as an intermediate attractor. Snapshots from the evolution of marginally super- and subcritical data are shown in Fig.~\ref{fig:EvolutionConformalFocusingEvolutionSubSuperFrames} for a representative case $\xh=1$, $b=-0.55$. Solutions differing by $\sim 10^{-64}$ in the amplitude of \eqref{eq:23.11.03_04} approach the static solution along its least damped QNM. For early and intermediate times the two curves are indistinguishable. Since the data is not exactly critical the unstable mode of $\PhisZero$ with $\Im\sigma^{(0)}_{0}<0$ becomes non-negligible and the solutions diverge along it.
This stage of the evolution can be described by the linearised solution:
\begin{equation}
	\label{eq:23.11.03_05}
	\Phi(v,z) \approx \PhisZero(z) + (\amp-\amp_{*})e^{i\sigma^{(0)}_{0}v}\chi_{0}(z) + \sum_{j>0}\alpha_{j}e^{i\sigma^{(0)}_{j}v}\chi_{j}(z)
	\,,
	\quad \alpha_{j}=\const
	\,,
\end{equation}
where the sum is over all damped QNMs, and $\amp_{*}$ denotes the critical value of the parameter of a chosen family of initial data \eqref{eq:23.11.03_04}.
Later, the supercritical data moves toward the finite-time blowup, which proceeds according to \eqref{eq:23.11.03_04}, while the subcritical data decays to zero following the QNMs of the trivial solution
\begin{equation}
	\label{eq:23.11.10_01}
	\Phi(v,z) \approx \beta_{0}e^{i\omega_{0}v}\psi_{0}(z) + \sum_{j>0}\beta_{j}e^{i\omega_{j}v}\psi_{j}(z)
	\,,
	\quad \beta_{j}=\const
	\,,
\end{equation}
with dominant contribution coming from the lowest damped mode $\omega_{0}$.
The details of this near-critical behaviour are illustrated in Fig.~\ref{fig:EvolutionConformalFocusingEvolutionSubSuperHorizon}.

\begin{figure}[t]
	\centering
	\includegraphics[width=0.47\textwidth]{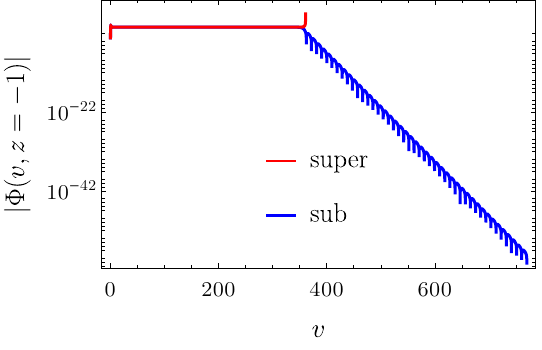}
	\hspace{4ex}
	\includegraphics[width=0.47\textwidth]{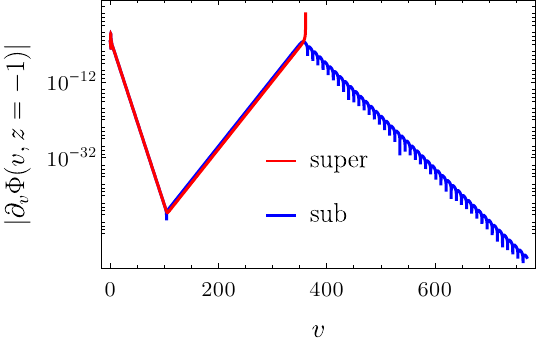}
	\caption{Time evolution of $|\Phi|$ and $|\partial_{v}\Phi|$ at the horizon for marginally sub- and supercritical data shown on Fig.~\ref{fig:EvolutionConformalFocusingEvolutionSubSuperFrames} (the $\xh=1$, $b=-0.55$ case). At intermediate times the solution is well approximated by the leading stable mode ($\sigma^{(0)}_{1}=\pm 1.26556 + 0.55092 i$) and the unstable mode ($\sigma^{(0)}_{-1}=-0.413592 i$) of the static solution $\PhisZero$. For late times the subcritical data follows the stable mode of the zero solution $\omega_{-1}=\pm 0.331294 + 0.317663 i$.}
	\label{fig:EvolutionConformalFocusingEvolutionSubSuperHorizon}
\end{figure}

This picture holds for any $0<\xh<\infty$ with $b>\bstar(\xh)$.
However, the quantitative behaviour of the nearly critical evolution will depend on the precise values of the parameters. The type of the decay is determined by mode with the smallest imaginary part, precisely whether it is a purely imaginary mode or a mode with nonzero real part. So in particular, for $b$ below the bifurcation curve the evolution of subcritical data is dominated by the non-oscillatory mode, see Fig.~\ref{fig:EvolutionConformalFocusingEvolutionSubSuperHorizonMultiple}.
As $b\rightarrow\bstar(\xh)$ the critical amplitude, $a_{*}$, tends to zero as well as the critical solution $s^{(0)}$. Thus, at $b=\bstar(\xh)$ we observe blowup for any data, see discussion below.

\begin{figure}[t]
	\centering
	\includegraphics[width=1.0\textwidth]{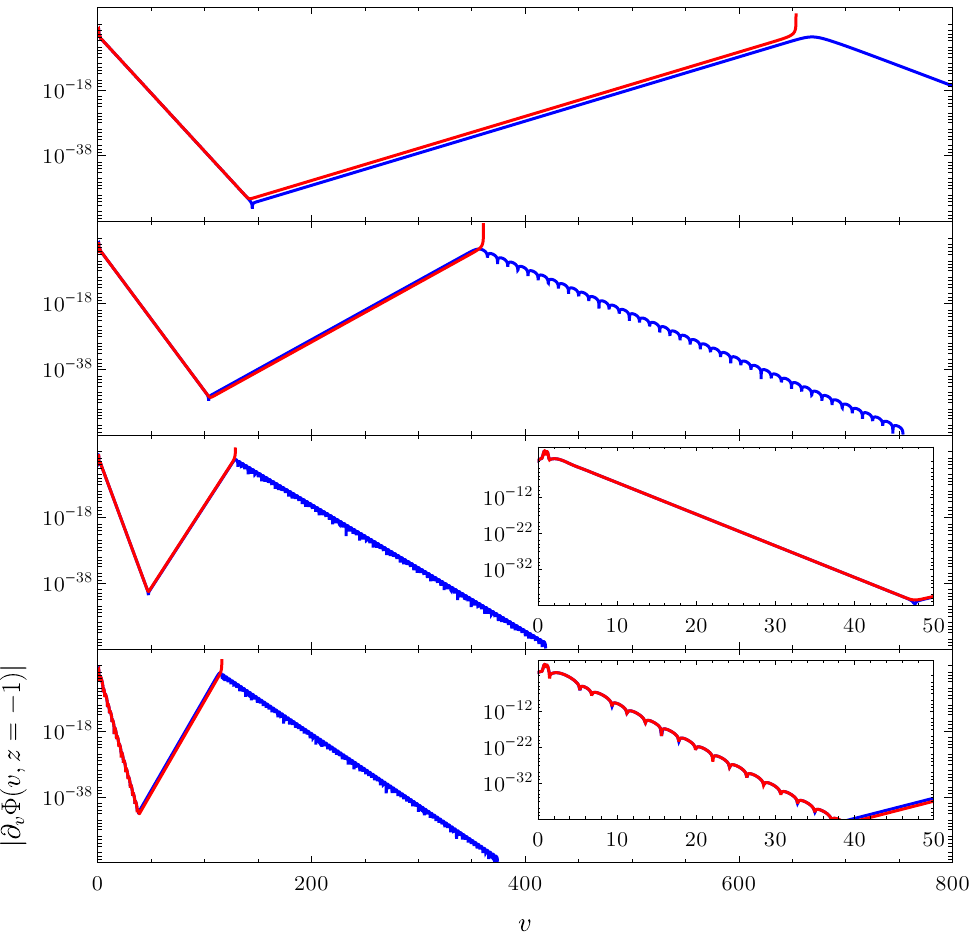}
	\caption{Analogue of Fig.~\ref{fig:EvolutionConformalFocusingEvolutionSubSuperHorizon} with $\xh=1$ for different values of $b$ (from top to bottom $b=-0.62, -0.55, 0, 0.25$). Depending on the values of $(\xh,b>\bstar(\xh))$ the decay towards the static solutions is oscillatory or not.}
	\label{fig:EvolutionConformalFocusingEvolutionSubSuperHorizonMultiple}
\end{figure}

% %%%%%%%%%%%%%%%%%%%%%%%%%%%%%%%%%%%%%%%%%%%%%%%%%%%%%%%%%%%%%%%
\subsubsection{$b\leq\bstar(\xh)$}
\label{sec:SmallbDynamicsFocusingCase}
% %%%%%%%%%%%%%%%%%%%%%%%%%%%%%%%%%%%%%%%%%%%%%%%%%%%%%%%%%%%%%%%

\begin{figure}[t]
	\centering
	\includegraphics[width=0.5\textwidth]{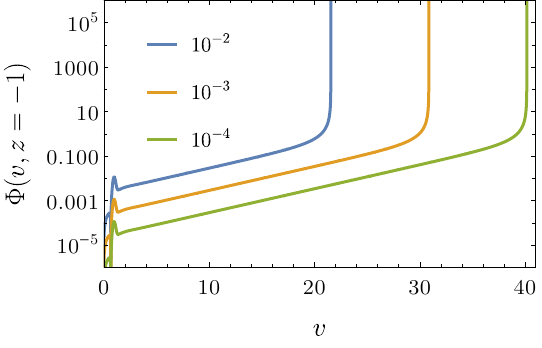}
	\caption{Typical evolution of small initial data for $b<\bstar(\xh)$ (point-wise behaviour of $\Phi$ at the black hole horizon $z=-1$). Data with different amplitudes $\amp$ of \eqref{eq:23.11.03_04} (colour coded) illustrate that before blow up, the solution is well approximated by the linearised solution \eqref{eq:23.11.03_06}. This is indicated by the slope of the curves, their relative vertical shift and the moment of onset of the nonlinear evolution. The initial growth rate agrees with the exponent of the unstable mode $\omega_{0}=-0.247504 i$ of the zero solution. Shown is the $(\xh,b)=(1,-0.75)$ case.}
	\label{fig:EvolutionConformalFocusingEvolutionUnderCriticalLineHorizon}
\end{figure}

In this case we observe blowup for any nonzero initial data. Starting with a very small perturbation of the trivial solution we observe that solution quickly approaches the spatial profile of the unstable mode $\psi_{0}$ of the zero solution, see Sec.~\ref{sec:LinearEquation}, and at the same time it exponentially grows with the exponent of that mode. Thus, during this phase of the evolution the solution is well approximated by the linearised solution:
\begin{equation}
	\label{eq:23.11.03_06}
	\Phi(v,z) \approx \psi_{0}(z) e^{i \omega_{0} v}
	\,.
\end{equation}
However, when the solution reaches certain threshold the nonlinearity becomes non-negligible and the nonlinear dynamics takes over. Consequently we observe a finite-time blowup \eqref{eq:23.11.03_04}. To further confirm that at early times the dynamics is indeed determined by the linear evolution \eqref{eq:23.11.03_06}, we show data with different amplitudes $\amp$ in \eqref{eq:23.11.03_04}, see Fig.~\ref{fig:EvolutionConformalFocusingEvolutionUnderCriticalLineHorizon}. The evolution looks qualitatively the same for any $(0<\xh<\infty,b<\bstar(\xh))$ including points on the critical line $\bstar(\xh)$.

% %%%%%%%%%%%%%%%%%%%%%%%%%%%%%%%%%%%%%%%%%%%%%%%%%%%%%%%%%%%%%%%
\subsection{Defocusing case $(\lambda=1)$}
\label{sec:DynamicsDefocusingCase}
% %%%%%%%%%%%%%%%%%%%%%%%%%%%%%%%%%%%%%%%%%%%%%%%%%%%%%%%%%%%%%%%

For defocusing nonlinearity solutions exist for all times irrespective of the size of initial data and for any choice of $(\xh,b)$. However, there is a qualitative change in the late time behaviour depending on whether we are above or below the critical curve $\bstar(\xh)$, at which we have a pitchfork bifurcation, with $\PhisZero$ bifurcating from zero.

In the region $b\geq\bstar(\xh)$ of the parameter space all solutions converge towards $\Phi=0$, as it is the only attractor, cf. Sec.~\ref{sec:StaticDefocusingCase}. After a series of nonlinear oscillations, the decay is dominated by the lowest damped QNM, whose frequency and damping rate depend on $(\xh,b)$, see Sec.~\ref{sec:LinearEquation}. Thus, the late time behaviour does not differ significantly from the small data case with focusing nonlinearity $\lambda=-1$.
\begin{figure}[t]
	\centering
	\includegraphics[width=0.47\textwidth]{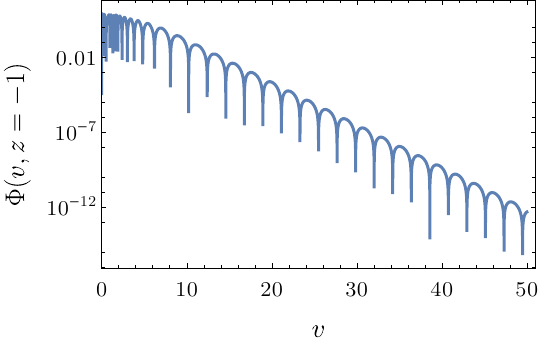}
	\hspace{4ex}
	\includegraphics[width=0.47\textwidth]{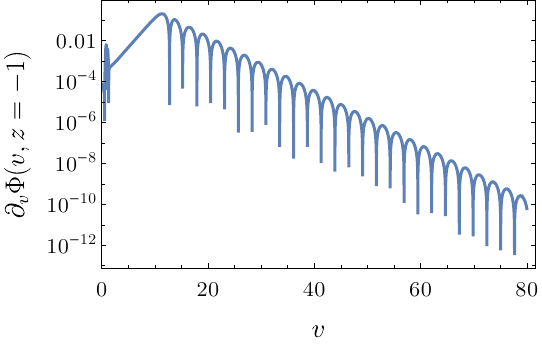}
	\caption{Illustration of qualitatively different behaviour for the cases $b>\bstar(\xh)$ (left plot) and $b<\bstar(\xh)$ (right plot) with $\xh=1$. For $b>\bstar(\xh)$, after a series of rapid nonlinear oscillations, an effect of large initial data, the solution converges to $\Phi=0$ via the dominant QNM. For $b<\bstar(\xh)$ even the tiniest perturbation of zero begins to grow exponentially ($\Phi=0$ is linearly unstable) before the solution settles on the static profile $\PhisZero$. For data comparison we plot $\Phi(v,z=-1)$ for $b=1$ (left plot) and $\partial_{v}\Phi(v,z=-1)$ for $b=-1$ (right plot).}
	\label{fig:EvolutionConformalDefocusing}
\end{figure}

However, if $b<\bstar(\xh)$ the zero solution is linearly unstable, Sec.~\ref{sec:LinearEquation}, and the linearly stable static solution $\PhisZero$ acts as a global attractor. Thus, for any initial data the solution settles on $\PhisZero$ and this approach is governed by the leading QNM $\sigma^{(0)}$. Details of dynamics of small initial data around zero is illustrated on Fig.~\ref{fig:EvolutionConformalDefocusing}. Interestingly, although the energy of the initial data is a fraction of the energy of the static solution $\PhisZero$, the perturbation grows in magnitude while being sourced from the boundary. In general, depending on the magnitude of initial data and the values of $(\xh,b)$, the energy falls into the horizon but can also be pumped into the domain by or radiated away through the conformal boundary, see Fig.~\ref{fig:EvolutionConformalDefocusingFluxes}.

\begin{figure}[t]
	\centering
	\includegraphics[width=0.47\textwidth]{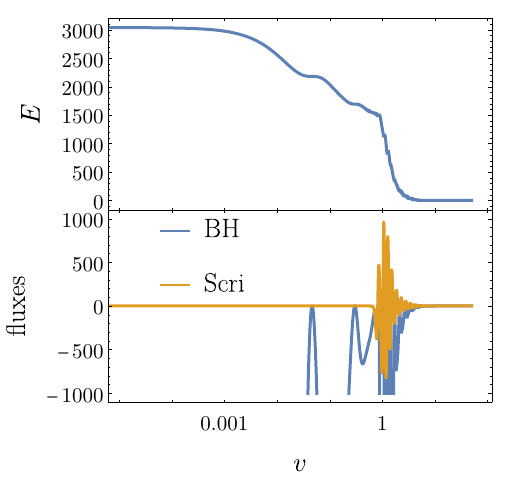}
	\hspace{4ex}
	\includegraphics[width=0.47\textwidth]{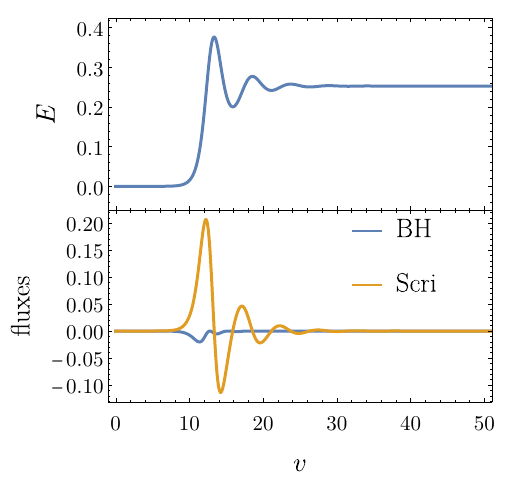}
	\caption{The energy \eqref{eq:23.10.03_01} and fluxes \eqref{eq:23.11.10_02} through BH horizon and Scri for solutions shown on Fig.~\ref{fig:EvolutionConformalDefocusing}; $(\xh,b)=(1,1)$ and $(\xh,b)=(1,-1)$ on left and right plots respectively.}
	\label{fig:EvolutionConformalDefocusingFluxes}
\end{figure}

% %%%%%%%%%%%%%%%%%%%%%%%%%%%%%%%%%%%%%%%%%%%%%%%%%%%%%%%%%%%%%%%
\section{Conclusions}
\label{sec:Conclusions}
% %%%%%%%%%%%%%%%%%%%%%%%%%%%%%%%%%%%%%%%%%%%%%%%%%%%%%%%%%%%%%%%

\begin{figure}[t]
	\centering
	\includegraphics[width=1.0\textwidth]{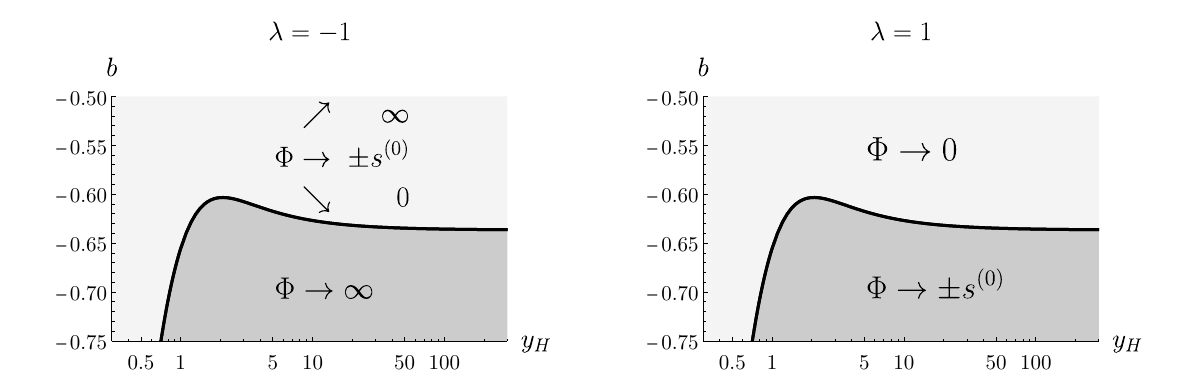}
 	\caption{Distinct asymptotic behaviours separated by the critical curve $\bstar(\xh)$ on the phase diagram $(\xh,b)$ for focusing (left plot) and defocusing (right plot) nonlinearities.}
	\label{fig:LastFigure}
\end{figure}

The main goal of this work was to understand the dynamics of nonlinear scalar field on SAdS background with Robin BC. We were in particular interested in how the behaviour of the field depends on the parameters of the model: the size of the black hole $\xh$ and the boundary condition $b$. It turns out that it is determined by the location of these parameters in the phase space with respect to the critical curve $\bstar(\xh)$, see Fig.~\ref{fig:LastFigure}.

In defocusing case for $(\xh,b)$ lying above the critical curve the zero solution plays the role of a global attractor: any initial data converges to it asymptotically. For $(\xh,b)$ below the critical curve this role is taken by two (up to the sign) static node-less solutions. These solutions bifurcate from zero, via pitchfork bifurcation, at the critical curve. The shape of this curve also leads to the conclusion that for sufficiently large black holes the zero solution is the attractor, which can be interpreted as the black hole absorbing the whole perturbation.

For focusing nonlinearity when parameters $(\xh,b)$ are below the critical curve one observes a nonlinear instability: all solutions eventually blow up at the horizon. On the other hand, for $(\xh,b)$ above this curve there exists a threshold, field configurations below it converge to zero, while above it one observes the blowup. In between this dichotomy lies $\PhisZero$ being a codimension-one attractor, as can be seen in Fig.~\ref{fig:EvolutionConformalFocusingEvolutionSubSuperHorizonMultiple}. As one gets closer to the critical curve in the parameter space this threshold decreases, eventually reaching zero. Additionally, for fixed $b$ sufficiently large black holes are stable for small initial data, as can be concluded from the behaviour of the critical curve for small $\xh$.

Let us briefly mention here that analogous behaviours can be observed in the massless case, i.e.\ when $m^2=0$ in \eqref{eq:22.09.22_04}. Then, as discussed in Sec.~\ref{sec:Introduction}, one is forced to assume the Dirichlet BC and the dynamics is qualitatively identical to the one observed for conformal equation with the same condition. In case of the defocusing nonlinearity, there are no static solutions and any field configuration converges to zero. For focusing nonlinearity, there exists a threshold separating initial data leading to a finite-time blowup and converging to zero.

To understand the dynamics of the considered system we also needed to study the static solutions, in particular their existence and linear stability (including the stability of zero solutions). It led us to a rather comprehensive grasp on their properties, as we discuss above, including limiting cases as $\xh\rightarrow 0$ and $\xh\rightarrow\infty$. The latter will lay foundation for our further research regarding dynamics of nonlinear scalar field on AdS background with Robin BC. In this case, due to the lack of the black hole, there is no simple mechanism of the energy loss for the system. It is possible that one observes there more complicated behaviour of the field, e.g., weak turbulence \cite{Bizoń.2011}.

Another potential direction that we would like to follow regards solutions that are axially symmetric. In this case, one can expect the logarithmic decay of the field \cite{Holzegel.2014}, suggesting the possibility of a weak turbulence in the presence of a black hole horizon. Finally, a natural continuation of this research is to investigate the dynamics for the self-gravitating case \cite{Bizoń.2020}, either with or without the self-interaction. We also plan to pursue this matter in the future.

% %%%%%%%%%%%%%%%%%%%%%%%%%%%%%%%%%%%%%%%%%%%%%%%%%%%%%%%%%%%%%%%
\vspace{4ex}
\noindent\textit{Acknowledgement.} We are grateful to Piotr Bizo\'n and Claude Warnick for helpful remarks. We acknowledge the support of the Austrian Science Fund (FWF), Project \href{http://doi.org/10.55776/P36455}{P 36455} and the START-Project \href{http://doi.org/10.55776/Y963}{Y963}.
% %%%%%%%%%%%%%%%%%%%%%%%%%%%%%%%%%%%%%%%%%%%%%%%%%%%%%%%%%%%%%%%

% %%%%%%%%%%%%%%%%%%%%%%%%%%%%%%%%%%%%%%%%%%%%%%%%%%%%%%%%%%%%%%%
\printbibliography
% %%%%%%%%%%%%%%%%%%%%%%%%%%%%%%%%%%%%%%%%%%%%%%%%%%%%%%%%%%%%%%%

\end{document}